\newtheorem{lemma}{Lemma}
\newtheorem{defi}{Definition}
\newtheorem{prop}{Proposition}
\noindent\textbf{Remark: }\ignorespaces}%
\newcommand{\smalltag}{%
  \def\tagform@##1{\maketag@@@{\scriptsize(\ignorespaces##1\unskip\@@italiccorr)}}%
}
\newcommand{\PP}{\mathbf{P}}
\newcommand{\QQ}{\mathbf{Q}}
\newcommand{\RR}{\mathbf{R}}
\newcommand{\ZZ}{\mathbf{S}}
\begin{document}

\title{No-local-broadcasting theorem for non-signalling behaviours and assemblages}

%
\author{Adrian Solymos}%
\email{solymos.adrian@wigner.hun-ren.hu}
\affiliation{HUN-REN Wigner Research Centre for Physics,  H-1525 Budapest P.O. Box 49, Hungary}
\affiliation{Eötvös Loránd University, Pázmány Péter sétány 1/A-C, 1117 Budapest, Hungary}

\author{Carlos Vieira}
\email{carloshv@unicamp.br}
\affiliation{Departamento de Matem\'{a}tica Aplicada, Instituto de Matem\'{a}tica, Estat\'{i}stica e Computa\c{c}\~{a}o Cient\'{i}fica, Universidade Estadual de Campinas, 13083-859, Campinas, S\~{a}o Paulo, Brazil}

\author{Cristhiano Duarte}
 \email{cristhianoduarte@gmail.com}
\affiliation{Institute for Quantum Studies, Schmid College of Science and Technology, Chapman University, One University Drive, Orange, CA, 92866, USA}
\affiliation{Instituto de Física, Universidade Federal da Bahia, Campus de Ondina, Rua Barão do Geremoabo, s.n., Ondina, Salvador, BA 40210-340, Brazil}%
\affiliation{Fundação Maurício Grabois, R. Rego Freitas, 192 - República, São Paulo - SP, 01220-010, Brazil}
\affiliation{Universidade Federal de Juiz de Fora, Departamento de Física, Juiz de Fora, MG, Brazil}

\author{Zoltán Zimborás}%
\email{zoltan.zimboras@helsinki.fi}
\affiliation{Department of Physics, University of Helsinki, Gustaf Hällströmin katu 2, Helsinki, 00014, Finland}
\affiliation{HUN-REN Wigner Research Centre for Physics,  H-1525 Budapest P.O. Box 49, Hungary}
\affiliation{Eötvös Loránd University, Pázmány Péter sétány 1/A-C, 1117 Budapest, Hungary}

\date{\today}

\begin{abstract}
The no-broadcasting theorem is a fundamental result in quantum information theory. It guarantees that a class of attacks on quantum protocols, based on eavesdropping and indiscriminate copying of quantum information, are impossible. Due to its fundamental importance, it is natural to ask whether it is an intrinsic quantum property or whether it also holds for a broader class of non-classical theories. To address this question, one could use the framework of correlation scenarios. Under this standpoint, Joshi, Grudka, and the Horodeckis conjectured that one cannot locally broadcast nonlocal behaviours. 
In this paper, we prove their conjecture based on the monotonicity of the relative entropy for behaviours. Additionally, following a similar reasoning, we obtain an analogous no-go theorem for steerable assemblages.
\end{abstract}

\maketitle
\section{Introduction}

Among the many no-go theorems in quantum information theory, it is fair to assume that the most famous are the no-cloning and the no-broadcasting theorems. Broadly speaking, the no-cloning theorem states that there is no universal quantum cloning machine that can always create two exact independent copies ($\rho\otimes\rho$) from unknown states $\rho$ \cite{wooters1982nocloning}. To a certain extent, the no-broadcasting result can be thought of as a generalisation of the no-cloning theorem. It says that there is no universal quantum channel such that, given an unknown input quantum state $\rho$, it outputs a bipartite quantum state whose either marginal coincides with the original state $\rho$ \cite{barnum1996nobroadcast, PhysRevLett.100.210502}. One might read this impossibility of broadcasting quantum information as a fundamental limit to informational tasks.  An even stronger formulation of the no-broadcasting theorem states that given two non-commuting quantum states $\rho_1$ and $\rho_2$, there is no quantum channel that could broadcast both of them \cite{barnum1996nobroadcast}.

No-cloning and no-broadcasting play a vital role in fields such as quantum cryptography, as they show that eavesdroppers are unable to perfectly copy unknown quantum information transmitted between two sites \cite{BB84,gisin2002quantumcryptography,pirandola2020advances}.
Additionally, these no-go theorems together with the Heisenberg uncertainty principle can be used to obtain a limit to the amount of information an agent can learn from a single copy of a quantum state \cite{nielsen2000quantum, massar1995optimal}.

Following the earlier results, a generalised no-broadcasting theorem was demonstrated for any non-classical, finite-dimensional probabilistic model satisfying a no-signalling criterion \cite{barnum2008generalized}. Therefore, it can be said that the impossibility put forward by the standard versions of the no-broadcasting theorem is not distinctive to quantum theory, as it is also present in broader non-classical theories as well \cite{jokinen2024nobroad}.
 
Variations of the no-broadcasting theorem have also been discussed in the literature: local versions of the theorem are perhaps the best example, see Figure~\ref{Fig.Broadcasting}. In these variations, the focus is on the potential broadcasting of a known bipartite quantum state using local operations. If one considers only LO (Local Operations), then nothing other than classical-classical states can be broadcast \cite{piani2008nolocalbroadcasting}. It has also been shown that this kind of broadcasting is equivalent to the general one in the case of quantum theory \cite{luo2010broadcastin,luo2010decomposition}. If one considers LOCC (Local Operations and Classical Communication), then only separable states can be broadcast \cite{horodecki2004dualentanglement,yang2005irreversibility}. This result remains unchanged when the requirement of LOCC operations is relaxed and all non-entangling operations are considered \cite{piani2009relative, Joshi2013nobroadcast}.

In \cite{Joshi2013nobroadcast}, the authors considered yet another variant of the broadcasting scenario. Their idea was to explore the local broadcasting of bipartite non-signalling behaviours with binary inputs and outputs---put another way, behaviours in the usual $(2,2,2)$ Bell scenario. They consider a class of operations that transform local behaviours into local ones (ones that admit a local hidden variable model) and prove that a nonlocal behaviour cannot be broadcast in this scenario if we are restricted to these transformations. The unfortunate drawback of Ref.~\cite{Joshi2013nobroadcast} is that their proof relies heavily on some properties of the $(2,2,2)$-scenario. Our work draws from that scholarship as we want to answer the question they raise: `is there a no-local-broadcasting theorem in the general scenario?', or in other words, `is there a proof for general behaviours?'.

A similar question also arises regarding the case of assemblages \cite{schrodinger_discussion_1935, Wiseman_Steering_2007,Uola_Quantum_Steering_2020}. We have seen that the no-broadcasting theorem is not exclusive to quantum theories and that alternative local versions exist for both quantum states and behaviours. In a crude approximation, assemblages can be thought of as an object halfway between bipartite quantum states and behaviours---an object that combines probabilities and quantum states. Inspired by this naive approximation, a natural line of inquiry is whether it is possible to broadcast (with the appropriate transformations) certain classes of assemblages. In this work, we not only prove a general local no-broadcasting theorem for behaviours, but also a similar result for assemblages using fundamental properties of the relative entropy of nonlocality and the relative entropy of steering.

Beyond its foundational interest, nonlocality is by now widely recognised as an operational resource in a variety of information-theoretic tasks, including device-independent quantum key distribution \cite{ekert_quantum_1991, zapatero_advances_2023}, randomness generation and expansion \cite{pironio_random_2010, liu_device-independent_2021}, as well as advantages in communication complexity problems \cite{Brukner:2004PRL}. From this perspective, it is natural to ask whether nonlocal behaviours could be locally broadcast, as such a possibility would allow one to locally amplify or distribute nonlocal correlations without additional nonlocal resources, potentially enhancing their usefulness in these applications. Establishing fundamental limitations on such broadcasting processes is therefore essential in order to understand the structure of nonlocality as a resource. Our results show that no such local amplification is possible, even when one considers general non-signalling behaviours beyond the quantum set. More broadly, our results fit within a wider effort to characterise nonclassical correlations through operational principles \cite{Brassard2006, kofler_classical_2007, Navascus2009, Pawowski2009, Fritz2013, Cabello2013a, nogueira_unexpected_2025} and no-go theorems \cite{wooters1982nocloning, barnum1996nobroadcast, jokinen2024nobroad, kumar_pati_impossibility_2000, braunstein_quantum_2007}.

\begin{figure}[ht]
	    \includegraphics[scale=1]{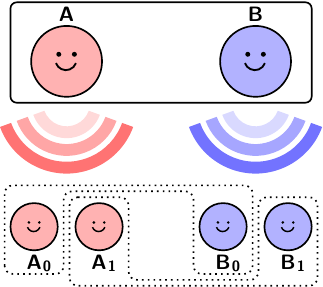}
		\caption{Pictorial representation of a local broadcasting scenario: Alice (A) and Bob (B) share some sort of known correlation, which is represented by them being enclosed in a box. They send some information to pairs of Alices ($\mathrm{A_0,A_1}$) and Bobs ($\mathrm{B_0,B_1}$) in some localised manner. The broadcast is successful if the lower Alice-Bob pairs with the same index share the original correlation illustrated by the dotted boxes (colours online).  \label{Fig.Broadcasting}}
\end{figure}  

\section{No-local-broadcasting of behaviours}
\label{sec:preliminaries}

In this section, we will state and prove a theorem about the impossibility of locally broadcasting known nonlocal behaviours (Thm.~\ref{Thm.ImpossibilityBroadcastBoxes}). We start by introducing the concepts of correlation scenarios, local behaviours, and the like. Then, we introduce a set of physically motivated `local' transformations that will become the core assumption of the no-go theorem. We conclude this section by showing that the possibility of broadcasting a nonlocal behaviour would imply a contradiction.

\subsection{Correlation scenarios}
\label{sec:preliminaries-correlation-scenarios}

Correlation scenarios are usually formulated in a device-independent framework. Think of one as a collection of $N$ boxes. Each box comes with $m$ number of inputs and $o$ number of outputs. Inputs are usually thought of as buttons that one can interact with. Pressing buttons generates an answer, which is the output for that given input on a particular box. In the device-independent framework, it is customary to imagine that each of these boxes belongs to an agent, that they are causally separated, and that agents will synchronise their actions, which means that they will push just one button per round, and a new round only starts after all agents have interacted with their boxes and written down the outcome of this interaction. Figure~\ref{Fig.DeviceIndApproach} depicts a paradigmatic scenario in which a round has taken place and every agent has pushed a button on their box obtaining an output.


%
\begin{figure}[ht]
	    \includegraphics[scale=0.5]{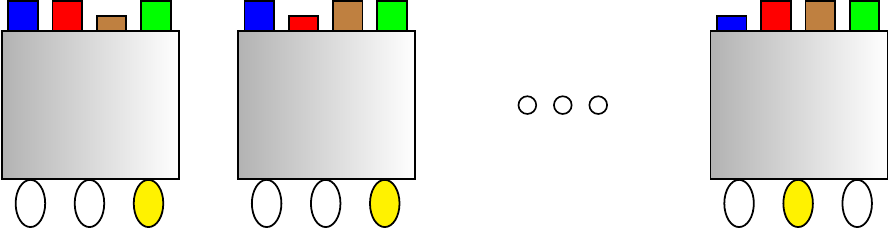}
		\caption {Pictorial representation of a (uniform) correlation scenario. Measurements are represented by the coloured buttons on the top of each box. A single outcome is returned after a measurement is performed, and schematically, one light bulb goes off (colours online).  \label{Fig.DeviceIndApproach}}
\end{figure}   

Note that in the most general scenario each box could have some arbitrary number of inputs and each of these inputs could have some arbitrary number of outputs. While these general scenarios are very much tractable, in practice one usually only deals with behaviours with a uniform number of inputs ($m$) and outputs ($o$) on each box, as it leads to less clutter in mathematical notation. Furthermore, for a behaviour's any given box, the inputs (outputs) are thought of as numbers from the set of positive integers $[m]\coloneqq \{1,2,\ldots, m\}$ ($[o]\coloneqq \{1,2,\ldots, o\}$). This ensures that every behaviour with the same number of parties, inputs, and outputs is comparable, and different input/output `alphabets' (labels) do not matter. Thus, from now on, we shall focus on $(N,m,o)-$correlation scenarios.

The framework is formulated in a manner that the inner physical mechanisms of each box are hidden away from their agents. Consequently, as they do not have access to the physical details producing a particular outcome, given the pushing of a certain button, the only reasonable description for this $(N,m,o)-$correlation scenario is via the aggregated joint statistics called the behaviour:
	\begin{equation}
		\small \PP=\{P(a,b,...,c | x,y,...,z)\}_{a,b,...,c,x,y,...,z} \in \mathbb{R}^{(om)^N}.
		\label{Eq.DefbehaviourGeneral2}
	\end{equation}


Each element $P(a,b,...,c | x,y,...,z)\in[0,1]$ means the joint probability of obtaining the outcome $a$ out of the first box, outcome $b$ out of the second box, ..., and outcome $c$ out of the $N$-th box, when the buttons $x,y,\ldots,z$ have been pressed on the respective boxes.

Note also that often the words `behaviour', `box', and `boxes' are used interchangeably in the literature. In our illustrative introduction, a behaviour is the same as the aggregated joint statistics that one can draw from boxes (plural). However, sometimes one imagines a single black box (cf.~PR-box \cite{BrunnerEtAl14}) with inputs from $N$ parties and outputs for $N$ parties. Thus, sometimes a behaviour is synonymous with a (singular) box. In this paper, we will use the term 'behaviour' to describe the joint correlation statistics of the individual boxes of the different parties.

	%

A further simplification is to only consider bipartite scenarios, that is, when $N=2$. In this case, we refer to the first box as belonging to Alice and the second as belonging to Bob, or $A$ and $B$ for short.

In the following, we present the most studied subsets of general correlation scenarios.

\subsection{Local behaviours}
\label{sec:preliminaries-local-boxes}

Local behaviours are considered to describe correlations having a classical explanation.

\begin{defi} \label{Def.LocalBox}
In a given $(N,m,o)-$scenario, we say that a behaviour $\PP=\{P(a,b,...,c | x,y,...,z)\}_{a,b,...,c,x,y,...,z}$ is \emph{local} whenever there exists a probability distribution $\{r(\lambda)\}_{\lambda}$ over an exogenous variable $\lambda$ and there exist conditional probability distributions $\{P_{A}(a|x,\lambda)\}_{a,x,\lambda}$, $\{P_{B}(b|y,\lambda)\}_{b,y,\lambda}$, ..., $\{P_{C}(c|z,\lambda)\}_{c,z,\lambda}$ such that
    \begin{equation}\label{Eq.DefLocal}
    \begin{split}
		&P(a,b,\ldots,c|x,y,\ldots,z)=\\
        &=\sum_{\lambda}r(\lambda)P_{A}(a|x,\lambda)  P_{B}(b|y,\lambda)  \ldots  P_{C}(c|z,\lambda)\text{,}
    \end{split}
    \end{equation}

for all inputs $x,y,\ldots, z$ and outputs $a,b,\ldots, c$.
\end{defi}

In sum, locality for a behaviour means that there is a hidden variable, to which we do not have access, that fully explains the correlation across the boxes, or more succinctly, that there is a \emph{classical} explanation for such correlations.


\subsection{Quantum behaviours}
\label{sec:preliminaries-Quantum-boxes}
Simply put, quantum behaviours describe correlations arising from the local measurement of quantum systems using positive operator-valued measures (POVMs). It can be more formally stated as follows.

\begin{defi}
In a given $(N,m,o)-$scenario, we say that a behaviour $\PP=\{P(a,b,...,c|x,y,...,z)\}_{a,b,...,c,x,y,...,z}$ is \emph{quantum}, whenever there exists a density operator  $\rho_{AB\ldots C}$  acting on $\mathcal{H}_A\otimes\mathcal{H}_B\otimes \cdots \otimes \mathcal{H}_C$, and for each input $x,y,\ldots,z$ there exists a POVM $\{ \Pi_{a}^{x} \}_a$, $\{ \Pi_{b}^{y} \}_b$, ..., $\{ \Pi_{c}^{z} \}_c$ respectively acting on the Hilbert spaces $\mathcal{H}_A$, $\mathcal{H}_B$, ..., $\mathcal{H}_C$ such that
\begin{equation}
    \small P(a,b,...,c|x,y,...,z) = \trace(\Pi_{a}^{x}\otimes \Pi_{b}^{y} \otimes \cdots \otimes \Pi_{c}^{z} \rho_{AB\ldots C})\text{,}
\end{equation}
for all inputs $x,y,\ldots, z$ and outputs $a,b,\ldots, c$.
\label{Def.QuantumCorrelations}
\end{defi}

Every local behaviour can be thought of as a quantum behaviour; however, the converse is untrue in general \cite{Bell64,CHSH69, Freedman1972experimental, aspect1982experimental}. Furthermore, there are more than just classical and quantum correlations. Behaviours that can be obtained by post-quantum theories also exist, and although they have never been realised in the laboratory, their study can still enlighten the nature of physical theories and what they all have in common. A notable example is provided by non-signalling correlations, which we will discuss next.

\subsection{Non-signalling behaviours}
\label{sec:preliminaries-NS-boxes}

Non-signalling behaviours are considered the most general ones, where the inputs of any one box do not affect the outputs of the other boxes. Every local and quantum behaviour is non-signalling; however, the reverse is untrue in general \cite{cirelson1980quantum, popescu1994quantum, aspect1982experimental, giustina2015significant}.

For the sake of simplicity, we introduce the notion of non-signalling only for two boxes. All of the reasoning here can be easily extended to larger scenarios with $N >2$ boxes. We also refer to Ref.~\cite {DDO18}, where non-signalling is motivated and rigorously defined for correlation scenarios involving more than two boxes. 

\begin{defi}
In a given $(2,m,o)-$scenario, we say that a behaviour $\PP=\{P(a,b|x,y)\}_{a,b,x,y}$ is \emph{non-signalling} whenever it satisfies for all $a,x,y,y'$
\begin{subequations}
\begin{equation}
    \sum_{b}P(a,b|x,y) \eqqcolon P(a|x) \coloneqq \sum_{b}P(a,b|x,y^{\prime}),
\end{equation}
and for all $b,y,x,x'$
\begin{equation}
    \sum_{a}P(a,b|x,y) \eqqcolon P(b|y) \coloneqq \sum_{a}P(a,b|x^{\prime},y).
\end{equation}
\end{subequations}
\label{Def.NonSignallingCorr}
\end{defi}
Simply put, one may want to consider non-signalling behaviours as those behaviours for which all marginal probabilities can be defined independently of the choices of the other parties. For the $(2,m,o)$ case, it is equivalent to saying that for the marginal of Alice, $P(a|x,y)=P(a|x)=P(a|x,y')$ for all choices of measurements and outcomes, and similarly for the marginal of Bob.

\subsection{The Kullback-Leibler divergence of behaviours}
We prove our main result for behaviours using information measures. In general terms, our argument is as follows: on the one hand, we will argue that these measures are monotones for a set of free transformations. On the other hand, we will demonstrate that broadcasting-like transformations from the aforementioned set of free transformations violate this monotonicity, consequently leading to a contradiction and implying that they cannot exist. To do so, we first need to decide upon an information measure. In this paper, we work with well-adapted extensions of the Kullback-Leibler divergence.  

The \textit{Kullback-Leibler divergence} (also called \textit{relative entropy}, or KL-divergence for short) is an information-based measure of disparity among probability distributions \cite{kullback_information_1951, kullback_information_1997}.

\begin{defi}
Let $p=\{p(a)\}_a$ and $q=\{q(a)\}_a$ be two probability distributions defined over the same finite set. The Kullback-Leibler divergence of $p$ from $q$ is the following:
\begin{equation}\label{eq:KL-divergence}
S(p||q)\coloneqq \sum_{a} p(a)\log\left(\frac{p(a)}{q(a)}\right),
\end{equation}
where we use the convention $0\log(\frac{0}{q(a)})\coloneqq 0$ for all $q(a)$, and $p(a)\log(\frac{p(a)}{0})\coloneqq\infty$ for $p(a)\neq 0$.
\end{defi}

It is well known that the KL-divergence is: $(i)$ non-negative; $(ii)$ zero if, and only if, the distributions match exactly; $(iii)$ not symmetric in $p$ and $q$; and $(iv)$ can potentially be equal to infinity. The KL-divergence is, in other words, a type of statistical distance: a measure of how one probability distribution $p$ is different from a second reference probability distribution $q$. In particular, the KL-divergence measures the information lost when $q$ is used to approximate $p$ \cite{burnham_model_2002}. It plays a central role in the theory of statistical inference \cite{cover1999elements}.

Given its extreme importance, it is natural to study extensions of the KL-divergence to other objects. For correlation scenarios, a possible generalisation was introduced in Refs.~\cite{PhysRevLett.95.210402, gallego2017nonlocality}. The main idea is to associate a single probability distribution with a behaviour, and once this is done, we use the KL-divergence defined in eq.~\eqref{eq:KL-divergence} to compare two behaviours. More concretely, in the case of a bipartite correlation scenario, if we consider a probability distribution $\pi=\{\pi(x,y)\}_{x,y}$ describing the probability of Alice's and Bob's choices for inputs, then $P(a,b,x,y) \coloneqq \pi(x,y)P(a,b|x,y)$ is the probability of them choosing the inputs $x,y$ and getting the outputs $a,b$. Therefore, we can compare two behaviours $\PP$, $\QQ$ through the KL-divergence between the probability distributions $\pi\PP\coloneqq\{\pi(x,y)P(a,b|x,y)\}_{a,b,x,y}$ and $\pi\QQ\coloneqq\{\pi(x,y)Q(a,b|x,y)\}_{a,b,x,y}$. More generally, if we want to measure how different two behaviours are, it is natural to grant Alice and Bob the freedom to choose their inputs according to any probability distribution available to them and then single out the one that optimises this distance. This process can be used to define the \textit{KL-divergence for behaviours}:

\begin{defi}
Let $\PP$ and $\QQ$ be two behaviours from the same $(2,m,o)$-scenario, and let $\pi=\{\pi(x,y)\}_{x,y}$ be a generic joint probability distribution on the inputs. The KL-divergence of behaviour $\PP$ from $\QQ$ is the following:
\begin{equation}
      S_{\mathrm{b}}(\PP||\QQ)
      \coloneqq \sup_{\pi} S(\pi\PP||\pi\QQ).
      \label{Eq.DefKLDivergenceBox}
\end{equation}
\end{defi}

Naturally, the above definition is easy to generalise to $(N,m,o)$-scenarios.

Remarkably, the expression in eq.~\eqref{Eq.DefKLDivergenceBox} can be simplified. As shown in Ref.~\cite{gallego2017nonlocality}, the optimal probability $\pi^{\ast}$ is such that the KL-divergence between the two behaviours can be expressed as:
\begin{equation}
     S_{\mathrm{b}}(\PP||\QQ)
     = \max_{x,y} S(\PP(. \, ,.|x,y)||\QQ(. \, ,.|x,y))\text{,}
\end{equation}
where $\PP(.\, ,.|x,y)\coloneqq \{P(a,b|x,y)\}_{a,b}$ denotes the probability distribution of the outcomes given the inputs $x$ and $y$, and similarly for the reference $\QQ$.

\subsection{Broadcasting of a behaviour}
We start by recalling the standard definition of broadcasting in the case of bipartite quantum states \cite{barnum1996nobroadcast}. Let $\mathcal{D}(\mathcal{H})$ denote the set of quantum states (density operators) acting on a Hilbert space $\mathcal{H}$. Given a bipartite quantum state $\rho_{AB} \in \mathcal{D}(\mathcal{H}_A\otimes\mathcal{H}_B)$, we say that the quantum state $\rho'\in \mathcal{D}(\mathcal{H}_{A_0} \otimes \mathcal{H}_{B_0} \otimes \mathcal{H}_{A_1}\otimes \mathcal{H}_{B_1})$ is a broadcast version of $\rho$ when the reduced states of $\rho'$ to the appropriate marginals ($A_0,B_0$ or $A_1,B_1$) are equal to $\rho$. Put another way, $\trace_{A_0,B_0}(\rho') = \rho = \trace_{A_1B_1}(\rho')$. It is worth emphasising that \textit{broadcasting} is a generalisation of \textit{cloning}, where we now allow for the possibility of correlations between the two parts. 

An analogous definition for behaviours can be crafted in the following way: consider a behaviour $\PP$ in the $(2,m,o)$-scenario. Given a behaviour $\PP'$ from the $(4,m,o)$-scenario describing the aggregated joint statistics between four agents labelled $A_0,A_1,B_0,B_1$, we say that it is realising a broadcasting of behaviour $\PP$ if the marginal statistics for the pair of agents $A_0, B_0$ is equal to the marginal statistics for the pair $A_1,B_1$, which must be, in turn, equal to the behaviour $\PP$. We summarise this discussion in the definition below. 

\begin{defi}[Broadcasting of a behaviour]
We say that a behaviour $\PP'$ from the $(4,m,o)$-scenario with agents $A_0,A_1,B_0,B_1$ is a broadcast version of the behaviour $\PP$ from the $(2,m,o)$-scenario with agents $A,B$ if:
\begin{subequations}\label{BroadcastingBoxes}
\begin{equation}
    \sum_{a_1,b_1}\! P'(a_0,\! a_1,\! b_0,\! b_1|x_0,\! x_1,\! y_0,\! y_1)\!=\! P(a_0,\! b_0|x_0,\! y_0),
\end{equation}
and
\begin{equation}
    \sum_{a_0,b_0}\! P'(a_0,\! a_1,\! b_0,\! b_1|x_0,\! x_1,\! y_0,\! y_1)\! =\! P(a_1,\! b_1|x_1,\! y_1),
\end{equation}
\end{subequations}
where the above equations must be true for every choice of inputs and outputs.
\label{Def.BroadcastForBoxes}
\end{defi}

Naturally, the above definition is easy to generalise to $(N,m,o)$-scenarios.

Recalling the usual notion of non-signalling behaviours, we should notice that eqs.~\eqref{BroadcastingBoxes} imply that $\PP'$ is non-signalling with respect to the partition $A_0B_0|A_1B_1$. 

Note that, trivially, a broadcast version can always be found for any behaviour. However, this paper explores whether a local-type transformation exists that can broadcast a known nonlocal behaviour.

In order not to overload the notation, we will usually denote the pair $a_0, a_1$ by $\bm{a}$ and similarly for $\bm{b}, \bm{x}$ and $\bm{y}$. Nonetheless, we will alternate between these two notations when it is convenient.

\subsection{Transforming behaviours into behaviours}

\subsubsection{LOSR Transformations}
To prove the impossibility of locally broadcasting behaviours, we must specify which set of transformations of behaviours we will consider and, more importantly, why.

The primary requirement for the transformations is that they ought to be local transformations in some sense. One should draw a parallel with LOCC transformations, which preserve the set of separable states and are, from the standpoint of quantum theory, a type of local operation.

In this sense, we will adapt LOSR (Local Operations and Shared Randomness) transformations to our needs \cite{gallego2017nonlocality,wolfe_quantifying_2020}. If the input behaviour is local these transformations do not introduce nonlocality to the output behaviour. Furthermore, they have the added benefit of being local transformations from an operational point of view, as attested by the following characterisation \cite{gallego2017nonlocality}:

\begin{defi}\label{def:LOSR}
    {\color{orange}} A map $\mathcal{M}$ from the $(2,m,o)$-scenario to the $(2,m',o')$-scenario is an \emph{elementary Local Operations and Shared Randomness} transformation if there exist behaviours $I(x,y|\chi,\psi)$, local in the $(\chi\to x )| (\psi \to y)$ split, and $O(\alpha, \beta|\chi, x, a, \psi,y,b)$, local in the $(\chi,x,a\to \alpha )| (\psi,y,b\to \beta)$ split, such that for an input behaviour $\PP=\{P(a,b|x,y)\}_{a,b,x,y}$ the output is:
    \begin{align}
        &P'(\alpha,\beta|\chi, \psi) \coloneqq \mathcal{M}(\PP)(\alpha,\beta|\chi, \psi)=\\
        &= \sum_{x,y,a,b}I(x,y|\chi,\psi)P(a,b|x,y)O(\alpha, \beta|\chi, x, a, \psi,y,b).\nonumber
    \end{align}
\end{defi}

An illustration of an elementary LOSR transformation is depicted in Fig.~\ref{fig:test1}.

As pointed out by Wolfe et al.~\cite{wolfe_quantifying_2020}, LOSR transformations should be considered as those that can be written as the convex sum of elementary LOSR transformations. In this paper, everything is also true for convex combinations of elementary LOSR transformations, given that they preserve the same locality properties as elementary LOSR transformations.

Note that LOSR transformations are usually considered as transformations of bipartite behaviours into bipartite behaviours. To discuss broadcasting, one needs to be familiar with bipartite to quadripartite transformations. One can imagine these as a standard one by bundling all inputs and outputs into a single input and output for each party. E.g.~by treating the inputs of the Alices $(x_0,x_1)$ as one input ($\bm{x}$ or $\chi$ in the definition). The drawback of \textbf{Definition \ref{def:LOSR}}, in the context of broadcasting, is that extra care needs to be taken when defining locality, see \textbf{Section \ref{subsection:NL in broadcasting scenario}}.

  \begin{center}
    \begin{figure}[h!]
    \includegraphics[width=.9\linewidth]{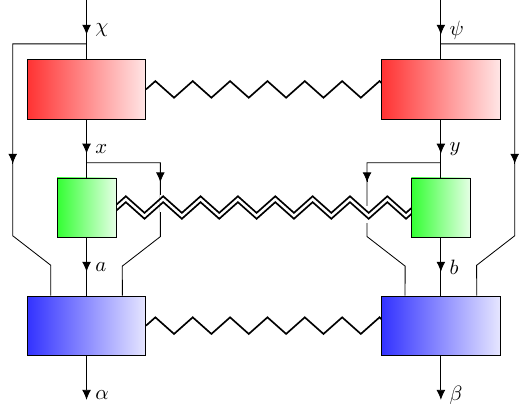}
    \caption{Illustration of an elementary LOSR transformation of a behaviour. The input behaviour $\PP$ is in green in the middle, with two wavy lines between the two parts representing possibly non-classical correlations; its inputs are $x$ and $y$, while the outputs are $a$ and $b$. The red boxes above represent the preprocessing behaviour $I$, while the blue ones below represent the post-processing behaviour $O$. They have single wavy lines between them to represent classical correlations. The final behaviour $\PP'$ has inputs $\chi,\psi$ and outputs $\alpha,\beta$ (colours online).}
  \label{fig:test1}
  \end{figure}
  \end{center}

An important property of LOSR transformations is that they transform local behaviours into local behaviours, i.e.~in our case, if $\PP$ is local, then $\PP' \coloneqq  \mathcal{M}(\PP)$ will also be local in the $A_0A_1 | B_0B_1$ partition \cite{wolfe_quantifying_2020}. Furthermore, any local behaviour can be prepared using a LOSR transformation, meaning that local behaviours can automatically be broadcast by LOSR transformations.

It should be highlighted that the set of transformations that preserves the set of local behaviours is a larger set than that of LOSR transformations. The standard examples are WPICC (Wirings and Prior-to-Input Classical
Communication) maps. These are transformations that preserve locality, but their implementation requires communication between the parties \cite{PhysRevLett.95.210402, gallego2017nonlocality}. Despite being a subset of the transformations that preserve locality, as discussed in Ref.~\cite{wolfe_quantifying_2020}, the set of LOSR transformations is the most natural set of `free' transformations in the context of nonlocality. This is due to the fact that its implementation only utilises resources that preserve the initial causal structure of the scenario, namely, classical correlations and local operations.

\subsubsection{LOSR Transformations and KL-Divergence}

An interesting property of the KL-divergence is that it is contractible over the set of LOSR transformations \cite{gallego2017nonlocality}, i.e.~if $\PP, \QQ$ are two non-signalling behaviours from the same scenario and $\mathcal{M}$ is a LOSR transformation, then 
\begin{equation}\label{eq:ContracitivityDivergence}
    S_{\mathrm{b}}\left(\mathcal{M}(\PP)||\mathcal{M}(\QQ)\right) \le S_{\mathrm{b}}(\PP||\QQ). 
\end{equation}

\subsection{Locality in the broadcasting scenario}\label{subsection:NL in broadcasting scenario}

In the previous section, we have highlighted that LOSR transformations preserve locality from the bipartite $A|B$ partition into the quadripartite $A_0A_1|B_0B_1$ partition. However, note that \textbf{Definition \ref{def:LOSR}} does not have any requirements on what locality properties the possible marginals should have given a local input. To this end, we introduce the set of \textit{Locally Realistic Non-Signalling behaviours} as defined by Ref.~\cite{Joshi2013nobroadcast} as a local set for the multipartite case. This set consists of behaviours that are local in the partition $A_0A_1|B_0B_1$, and, in addition, the local decomposition has non-signalling components between $A_0$ and $A_1$ and also between $B_0$ and $B_1$, leading to all marginals being well-defined. 

More formally,

\begin{defi}
    A behaviour $\QQ=\{Q(\bm{a},\bm{b}|\bm{x},\bm{y})\}_{\bm{a},\bm{b},\bm{x},\bm{y}}$ from the $(4,m,o)$-scenario is \emph{locally realistic non-signalling} if there exists a probability distribution $\{r(\lambda)\}_{\lambda}$ over an exogenous variable $\lambda$ and there exist behaviours $\{Q_{A}(a_0,a_1|x_0,x_1,\lambda)\}_{a_0,a_1,x_0,x_1,\lambda}$ and $\{Q_{B}(b_0,b_1|y_0,y_1,\lambda)\}_{b_0,b_1,y_0,y_1,\lambda}$ that are non-signalling for every fixed $\lambda$ in the $A_0|A_1$ and $B_0|B_1$ splits, respectively, such that every element of $\QQ$ can be decomposed as:
    \begin{equation}
    \begin{split}
       &Q(a_0, a_1, b_0, b_1|x_0, x_1, y_0, y_1) =\\
       &= \sum_{\lambda} r(\lambda)Q_{A}(a_0, a_1|x_0,x_1,\lambda)Q_{B}(b_0, b_1|y_0,y_1,\lambda)\text{.}
    \end{split}
    \end{equation}
    The set of all behaviours with such properties will be denoted by $LR_{ns}$.
\end{defi}

Note that this means that a quadripartite locally realistic non-signalling behaviour is non-signalling in every possible partition of its boxes. Furthermore, such behaviour is local in any partition of its boxes, with Alices and Bobs on different sides of the partition. This set will be considered local in the broadcasting scenario.

Having introduced the set that we can consider local in the broadcasting scenario, we can ask whether or not a behaviour belongs to this set. In the negative case, when a behaviour does not belong to this set, it is fair to still be interested in how distant it is from the elements of this set. To answer, we can use the KL-divergence defined for correlation scenarios, which, as discussed, compares how different two behaviours are. Using the KL-divergence for behaviours, we introduce the relative entropy of nonlocality in the following way \cite{gallego2017nonlocality}:

\begin{defi}
    The \emph{relative entropy of nonlocality} of a behaviour $\PP$ from the $(4,m,o)$-scenario is given by:
    \begin{equation}
        E_{\mathrm{LR}}(\PP) \coloneqq \inf_{\QQ \in \mathrm{LR_{ns}}} S_{\mathrm{b}}(\PP||\QQ),
        \label{Eq.DefRelativeEntropyOfNL}
    \end{equation}
    where the index LR references the $LR_{ns}$ set.
\end{defi}

Note that the KL-divergence for correlation scenarios is not a proper distance measure, as it is not symmetric. Nevertheless, it is fair to say that the relative entropy of nonlocality defined by eq.~\eqref{Eq.DefRelativeEntropyOfNL} measures how distinct $\PP$ is from the probability distributions that admit a locally realistic non-signalling description. 

Note that, with a slight abuse of notation, we can extend the relative entropy of nonlocality to behaviours from the $(2,m,o)$-scenario as well by treating the local realistic non-signalling set ($\mathrm{LR_{ns}}$) as the local set (L). This respects the properties of the above defined entropy since the marginals of a $\mathrm{LR}_{ns}$ behaviour are local.

Given this definition of the local set in the broadcasting scenario, an additional requirement for the LOSR transformations to be admissible as local transformations is for them to preserve this definition of locality:

\begin{defi}
A transformation of behaviours is said to be a \textit{$LR_{ns}$-LOSR transformation} when $(i)$ it belongs to the set of LOSR transformations, and $(ii)$ it transforms local behaviours into behaviours in $LR_{ns}$.
\end{defi}

Thus, $\mathrm{LR_{ns}}$-LOSR transformations will be the relevant set of transformations for local broadcasting. Note that no restrictions have been imposed for the marginals of the output for general nonlocal inputs. The only tacitly implied restriction is that for the known behaviour that one wants to broadcast (whether local or nonlocal), the marginals on $A_0B_0$ and on $A_1B_1$ must be well-defined for us to consider the transformation a broadcasting transformation. For all other input behaviours, the well-defined (non-signalling) marginals might not exist.

\subsection{No-local-broadcasting theorem for nonlocal non-signalling behaviours}
\label{sec:preliminaries-ea-pam-scenarios}
This section deals with our main no-go theorem for behaviours. We show that it is impossible to locally broadcast a nonlocal non-signalling behaviour using only $\mathrm{LR_{ns}}$-LOSR transformations. Our argument is by contradiction. In fact, we will establish that if it were possible to broadcast a nonlocal behaviour $\PP$ by a $\mathrm{LR_{ns}}$-LOSR transformation $\mathcal{M}$, then we would have $E_{\mathrm{LR}}(\mathcal{M}(\PP)) < E_{\mathrm{LR}}(\mathcal{M}(\PP))$, which is a contradiction.

The proof will be divided into two parts, which we will state as propositions. First, we will show that the relative entropy of nonlocality is contractive over the set of $\mathrm{LR_{ns}}$-LOSR transformations.
\begin{restatable}{prop}{propEntropyContractivity}\label{prop:EntropyContractivity}
    If $\mathcal{M}$ is a $LR_{ns}$-LOSR transformation and $\PP$ is a non-signalling behaviour, then
    \begin{equation}
    E_{\mathrm{LR}}(\mathcal{M}(\PP)) \le  E_{\mathrm{LR}}(\PP)\text{.}
    \end{equation}
\end{restatable}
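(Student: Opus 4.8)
The plan is to run the standard monotonicity argument for relative-entropy resource measures: take a near-optimal free box for the infimum defining $E_{\mathrm{LR}}(P)$, push it through $\mathcal{M}$, and then combine the two defining properties of an $\mathrm{LR_{ns}}$-LOSR map (preservation of $\mathrm{LR_{ns}}$, and being LOSR hence divergence-contracting). No genuinely new ingredient beyond eq.~\eqref{eq:ContracitivityDivergence} and the definition of $\mathrm{LR_{ns}}$-LOSR transformations should be needed.

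First I would unfold the definition in eq.~\eqref{Eq.DefRelativeEntropyOfNL}. Since the infimum over $\mathrm{LR_{ns}}$ need not be attained, I would fix $\varepsilon > 0$ and choose a competitor $Q_\varepsilon \in \mathrm{LR_{ns}}$ with
\[
S_{\mathrm{b}}(P \,||\, Q_\varepsilon) \le E_{\mathrm{LR}}(P) + \varepsilon .
\]
Next I would apply $\mathcal{M}$ to both arguments. By clause $(ii)$ in the definition of an $\mathrm{LR_{ns}}$-LOSR transformation, $\mathcal{M}(Q_\varepsilon) \in \mathrm{LR_{ns}}$, so $\mathcal{M}(Q_\varepsilon)$ is an admissible competitor in the infimum defining $E_{\mathrm{LR}}(\mathcal{M}(P))$. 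Moreover, every element of $\mathrm{LR_{ns}}$ is non-signalling (its local decomposition has non-signalling components on each side), so both $P$ and $Q_\varepsilon$ satisfy the hypotheses of the contractivity statement, and clause $(i)$ lets me invoke eq.~\eqref{eq:ContracitivityDivergence}.

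Chaining these facts then gives
\[
E_{\mathrm{LR}}(\mathcal{M}(P)) \;\le\; S_{\mathrm{b}}(\mathcal{M}(P) \,||\, \mathcal{M}(Q_\varepsilon)) \;\le\; S_{\mathrm{b}}(P \,||\, Q_\varepsilon) \;\le\; E_{\mathrm{LR}}(P) + \varepsilon ,
\]
where the first inequality uses that $\mathcal{M}(Q_\varepsilon)$ lies in the feasible set of the infimum, the second is eq.~\eqref{eq:ContracitivityDivergence}, and the last is the choice of $Q_\varepsilon$. Letting $\varepsilon \to 0$ yields the claim. I expect the only real subtleties to be bookkeeping rather than conceptual: namely, handling the possibly-unattained infimum via the $\varepsilon$-argument above (or, if one prefers, taking a minimizing sequence $Q_n$), and making explicit that membership in $\mathrm{LR_{ns}}$ automatically supplies the non-signalling property that eq.~\eqref{eq:ContracitivityDivergence} requires, so that both defining clauses of an $\mathrm{LR_{ns}}$-LOSR map are each used exactly once.
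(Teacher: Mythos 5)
Your proposal is correct and follows essentially the same route as the paper's proof: both arguments combine the $\mathrm{LR_{ns}}$-preservation property (so that $\mathcal{M}(Q)$ is a feasible competitor in the infimum defining $E_{\mathrm{LR}}(\mathcal{M}(P))$) with the LOSR-contractivity of $S_{\mathrm{b}}$ from eq.~\eqref{eq:ContracitivityDivergence}. The only cosmetic difference is that you handle the possibly-unattained infimum with an explicit $\varepsilon$-competitor, whereas the paper restricts the infimum to the image $\mathcal{M}(\mathrm{LR_{ns}})$ and chains the two infima directly; these are interchangeable.
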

The proof of the above proposition is mainly based on two special properties. One is the fact that $\mathrm{LR_{ns}}$-LOSR transformations preserve the set of $\mathrm{LR_{ns}}$ behaviours. The other is the contractivity of the KL-divergence over LOSR transformations (see eq.~\eqref{eq:ContracitivityDivergence}). The details of the proof are given in \textbf{Appendix \ref{sec:appendix-proof-prop:EntropyContractivity}}.

The second proposition tells us that the amount of nonlocality that a broadcast version from a nonlocal behaviour has is greater than the initial behaviour.
\begin{restatable}{prop}{propEntropyDilation}\label{prop:EntropyDilatation}
    If $\PP'$ is a broadcast version of a nonlocal behaviour $\PP$, then 
    \begin{equation}\label{eq:major}
        E_{\mathrm{LR}}(\PP') > E_{\mathrm{LR}}(\PP).
    \end{equation}
\end{restatable}
The proof of this proposition is more elaborate, and the details are given in \textbf{Appendix \ref{sec:appendix-proof-prop:EntropyDilatation}}. To summarise, we adapt the well-known \textit{chain rule} of probability distributions for the case of behaviours \cite{kullback_information_1997,cover1999elements}. The chain rule is an identity between the KL-divergence of a joint probability distribution and the KL-divergence of one of the marginals of this distribution plus the mean of the KL-divergence of the conditional distributions (see eq.~\eqref{eq:ChainRuleBox}). From this identity, we can recover the relative entropy of the marginal distribution as the first term of the identity. Furthermore, it was possible to prove that if $\PP$ is nonlocal, then the second term is strictly positive. Concluding therefore that $E_{\mathrm{LR}}(\PP') > E_{\mathrm{LR}}(\PP)$.

We can now establish one of the main results of this work.
\begin{restatable}{thm}{ThrNoBroadcastingBehaviours}[No-local-broadcasting of behaviours]\label{Thm.ImpossibilityBroadcastBoxes}
It is impossible to locally broadcast any known bipartite nonlocal non-signalling behaviour using LR$_{ns}$-LOSR transformations.
\end{restatable}

\begin{proof}
    Let us suppose, by contradiction, that there is a $\mathrm{LR_{ns}}$-LOSR transformation $\mathcal{M}$ that broadcasts a known nonlocal non-signalling behaviour $\PP$. Then, by \textbf{Proposition \ref{prop:EntropyContractivity}}, we have:
    \begin{equation}\label{eq:ThmEntropyContractivity}
        E_{\mathrm{LR}}\left(\mathcal{M}(\PP)\right) \le  E_{\mathrm{LR}}(\PP)\text{.}
    \end{equation}
    On the other hand, as $\mathcal{M}(\PP)$ is a broadcast version of $\PP$, by \textbf{Proposition \ref{prop:EntropyDilatation}}:
    \begin{equation}\label{eq:ThmEntropyDilation}
        E_{\mathrm{LR}}(\PP) <  E_{\mathrm{LR}}\left(\mathcal{M}(\PP)\right).
    \end{equation}
    Therefore, by Equations \eqref{eq:ThmEntropyContractivity} and \eqref{eq:ThmEntropyDilation}:
    \begin{equation}
        E_{\mathrm{LR}}(\mathcal{M}(\PP)) < E_{\mathrm{LR}}(\mathcal{M}(\PP)),
    \end{equation}
    which is a contradiction. Thus, it is impossible to broadcast any known nonlocal non-signalling behaviour by $\mathrm{LR_{ns}}$-LOSR transformations. 
\end{proof}

We emphasise that Theorem~\ref{Thm.ImpossibilityBroadcastBoxes} holds at the level of general non-signalling behaviours and is therefore not restricted to those admitting a quantum realisation. In particular, the result applies to genuinely post-quantum behaviours, such as PR boxes \cite{popescu1994quantum}.

\section{No-local-broadcasting of steerable assemblages}
\label{sec:results}
In this section, we show that steerable assemblages \emph{cannot} be broadcast by local operations. To prove this claim, we adopt a strategy essentially analogous to the one we implemented for behaviours. However, several adaptations are needed. In a bipartite steering scenario, the situation is described by a collection of ensembles of quantum states on Bob’s side together with a conditional probability distribution (box) on Alice’s side. Thus, from the outset, there is a combination of quantum states and probability distributions. This combination is reflected in our proof, as we combine the strategies from the previous section with those from the no-local-broadcasting theorem for quantum states \cite{piani2009relative}.

We begin by recalling the fundamental concepts behind quantum steering; more extensive discussions can be found in \cite{schrodinger_discussion_1935, Wiseman_Steering_2007, Uola_Quantum_Steering_2020}. Then, based on the KL-divergence, we provide a definition of relative entropy suitable for the steering scenario. Following that, we address a set of physically motivated transformations that will become the basis of our analysis for the no-go theorem with which we conclude this section. 

\subsection{Steering scenarios}

Similar to correlation scenarios, steering scenarios are often determined by the number of agents involved in the process, how those agents are grouped, and also by the resources to which those agents have access. Again, one can imagine granting boxes to each agent; however, some fixed agents will be allowed to open their box and realise that it contains a quantum system, and thus, they can do a tomographically complete measurement on it. The remaining agents will get simple boxes with inputs (buttons) and outputs (lights), and no information about the contents. Again, the agents work in rounds: some just press a button at the beginning of a round and receive an outcome at the end, while the others perform a tomographically complete measurement and obtain partial information about the quantum state in the box for that given round. 

Figure~\ref{Fig.DeviceIndAssembl} depicts the simplest steering scenario where one party has access to a box with some inputs and some outputs, and the other party has access to an `openable' box that contains a quantum system.
\begin{figure}[ht]
	    \includegraphics[scale=0.65]{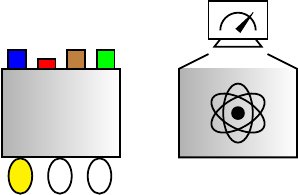}
		\caption{Pictorial representation of a bipartite steering scenario. One agent gets the usual box with coloured buttons on the top and outcomes represented by light bulbs. The other agent receives a quantum system on which they can make measurements. After pressing buttons and performing measurements, every agent receives an outcome (colours online).  \label{Fig.DeviceIndAssembl}}
\end{figure}

After many rounds of observations, the parties collect their statistics into a so-called \emph{assemblage}: a set containing probabilities for the given input-output pairs and the related quantum state of the agents who were provided a quantum state. Formally:
\begin{equation}
    \small \RR\coloneqq\{ \!P(a,\!b,\!...,\! c|x,\!y,\!...,\! z) \rho_{a,b,...,c|x,y,...,z}\}_{a,b,..., c,x,y,...,z}\text{,}
\end{equation}
where the probability distribution has $x,y,\ldots, z$ inputs and $a,b,\ldots,c$ outputs related to the number of agents who have received boxes, whereas the quantum state $\rho_{a,b,\ldots ,c|x,y,\ldots, z}$ has as many parts as agents who have received quantum states.

Similar to correlation scenarios, the central purpose of steering theory is to describe correlations between the parties involved. Adopting a realistic interpretation of quantum theory, one could say: do the choices of the parties with boxes influence (steer) the quantum states on which the other parties can make measurements?

While general steering scenarios are very much tractable, in practice one usually only deals with assemblages with a uniform number of inputs ($m$) and outputs ($o$), and simple numerical input-output alphabets on each box and uniformly $d$-level quantum systems for those parties that receive a quantum system. Thus, one can talk about $(N_1,N_2,m,o,d)$-scenarios where $N_1$ is the number of parties with boxes of $m$ inputs and $o$ outputs, and $N_2$ is the number of parties with $d$-level quantum systems. For these uniform steering scenarios, it is less of a hassle to pinpoint where the aggregated joint statistics belong to seen as a vector:
\begin{equation}
    \RR\in \underbrace{\mathcal{L}(\mathcal{H}^{\otimes N_2}) \times \cdots \times \mathcal{L}(\mathcal{H}^{\otimes N_2}}_{(m \cdot o)^{N_1}})\text{,}
\end{equation}
where $\mathcal{L}(\mathcal{H}^{\otimes N_2})$ denotes the linear operators acting on $\mathcal{H}^{\otimes N_2}$, the joint $d^{N_2}$-dimensional Hilbert space, which describes the quantum systems of the parties having access to quantum systems $\mathcal{H}$.

For concreteness, in the following we consider the well-known bipartite scenario ($N_1=N_2=1$), involving two parties (two agents), namely Alice and Bob, or $A$ and $B$ for short. An assemblage between them is described by the following set:
\begin{equation}
    \RR\coloneqq\{P(a|x) \rho_{a|x}\}_{a,x}  \in\underbrace{\mathcal{L}(\mathcal{H}_{B}) \times \cdots \times \mathcal{L}(\mathcal{H}_{B}}_{m \cdot o})\text{,}
\end{equation}
where $\mathcal{H}_B$ denotes the Hilbert space describing the quantum system Bob has access to.

In the following, we present the most studied subsets
of bipartite steering scenarios.

\subsection{Unsteerable assemblages}

The modern bipartite steering scenario emerged as an asymmetric information-theoretic task, in which Alice attempts to convince Bob that she can generate an entangled bipartite state $\rho_{AB}$ by providing him with half of it, $\rho_B$ \cite{Wiseman_Steering_2007}. Alice's choice of measurement and outcome (linked to the effect $\Pi^x_a$) on her part of the state might be correlated with the state of Bob's quantum system. More concretely, the probability of Alice getting the output $a$ when measuring $x$ is
\begin{equation}
    P(a|x) \coloneqq \trace[(\Pi_a^x \otimes \mathbb{1}_B )\rho_{AB} ]\text{,}
\end{equation}
and, if $P(a|x)\neq 0$, then the post-measurement state of Bob is
\begin{equation}
    \rho_{a|x} \coloneqq \frac{\trace_{A}[(\Pi_a^x \otimes \mathbb{1}_B )\rho_{AB}]}{P(a|x)}\text{,}
\end{equation}
otherwise, it is non-existent.

When the state Alice produces is highly entangled (e.g.~a Bell state), one interpretation is that Alice's measurement choice steers Bob's part of the quantum state (the reduced state of Bob) into a different ensemble, out of which an element can be post-selected if Alice's outcome is known. However, if the state Alice produces is separable, then she cannot have a non-classical influence on Bob's system. In this case, Bob's part of the quantum state may only appear to be influenced: any influence can be explained by an underlying local hidden state model that Alice uses to present different ensembles corresponding to her different measurement choices, utilising classical correlations. More formally:

\begin{defi}
    In a bipartite $(1,1,m,o,d)$ steering scenario, we say that an assemblage $\RR=\{P(a|x) \rho_{a|x}\}_{a,x}$ is \emph{unsteerable} whenever there exists a probability distribution $\{r(\lambda)\}_\lambda$ over an exogenous variable $\lambda$, together with a set of quantum states $\{\rho(\lambda)\}_\lambda\subset \mathcal{D}(\mathcal{H}_B)$ and a probability distribution $\{P(a|x,\lambda)\}_{a,x,\lambda}$, such that
    \begin{equation}\label{eq:LHS-model}
        P(a|x)\rho_{a|x}=\sum_\lambda r(\lambda)P(a|x,\lambda) \rho(\lambda)\text{,}
    \end{equation}
    for all inputs $x$ and outputs $a$.
\end{defi}

We should note the similarity between the definition of local behaviours, eq.~\eqref{Eq.DefLocal}, and unsteerable assemblages, eq.~\eqref{eq:LHS-model}. Both definitions follow the same idea that correlations between parts can be generated from and mediated by a classical hidden variable $\lambda$. If an assemblage is unsteerable, we say that it admits a local hidden state (LHS) model. Assemblages that do not admit a local hidden-state (LHS) model are said to be steerable \cite{Wiseman_Steering_2007}.

Note that unsteerability does not imply separability of the underlying quantum state, as there are bipartite entangled states that are unsteerable \cite{Wiseman_Steering_2007}, similarly to the fact that there exist bipartite entangled states that do not exhibit nonlocality \cite{werner1989states}.

\subsection{Non-signalling assemblages}

Non-signalling assemblages are considered the most general ones where the input of Alice does not change the reduced state on Bob's side. More formally:

\begin{defi}
     In a bipartite $(1,1,m,o,d)$ steering scenario, we say that an assemblage $\RR=\{P(a|x) \rho_{a|x}\}_{a,x}$ is \emph{non-signalling} whenever it satisfies for all inputs $x,x'$
     \begin{equation}
         \sum_a P(a|x)\rho_{a|x}\coloneqq \rho_B\eqqcolon\sum_a P(a|x')\rho_{a|x'}\text{.}
     \end{equation}
\end{defi}
This condition has a clear operational interpretation: if Alice could choose an input $x$ such that Bob’s reduced state $\sum_a P(a|x)\rho_{a|x}$ depends on $x$, then Bob could infer Alice’s choice by performing local tomography on his system, enabling signalling from Alice to Bob. The non-signalling condition therefore guarantees that Alice’s measurement choice cannot influence Bob’s local statistics.  Observe that unsteerable assemblages are trivially non-signalling.

Note that in the bipartite steering scenario, any non-signalling assemblage can be linked to an underlying bipartite quantum state that Alice and Bob make measurements on \cite{Uola_Quantum_Steering_2020}. However, signalling bipartite assemblages and general multipartite assemblages may not have an underlying quantum state that generates them. In the following, we abandon the requirement of having an underlying quantum state description and refer only to the assemblage description.

We can easily define non-signalling for multipartite assemblages as well \cite{Uola_Quantum_Steering_2020}, which we shall do for the two-Alices-scenario:
\begin{defi}[Multipartite non-signalling]\label{def:multipart_ns}
    In a multipartite $(2,1,m,o,d)$ steering scenario with two Alices and one Bob, we say that the assemblage $\RR=\{P(a_0,a_1|x_0,x_1)\rho_{a_0,a_1|x_0,x_1}\}_{a_0,a_1,x_0,x_1}$ is $A_0\to A_1B$ non-signalling if for all $x_0,x_0'$ inputs of $A_0$ we have:
    \begin{equation}
    \begin{gathered}
        \sum_{a_0}P(a_0,a_1|x_0,x_1)\rho_{a_0,a_1|x_0,x_1}\\ \eqqcolon P(a_1|x_1)\rho_{a_1|x_1}\coloneqq\\
        \sum_{a_0}P(a_0,a_1|x_0',x_1)\rho_{a_0,a_1|x_0',x_1}\text{,}
    \end{gathered}
    \end{equation}
    and similarly for the $A_1\to A_0B$ direction.
\end{defi}

In particular, for the behaviours we also have that $P(a_1|x_1)=\sum_{a_0}P(a_0,a_1|x_0,x_1)$ for any input $x_0$.

In the case of $A_0\to A_1B$ (or $A_1\to A_0B$), non-signalling means that the marginal $A_1B$ (or $A_0B$) is well defined, independently of the inputs of the other party. Indeed, without the requirement that the expression
\begin{equation}
    \sum_{a_0} P(a_0,a_1|x_0,x_1)\rho_{a_0,a_1|x_0,x_1}
\end{equation}
be independent of $x_0$, the $A_1B$ marginal would depend on the input choice of $A_0$, and thus would not correspond to a unique operational state accessible to $A_1$ and Bob. The non-signalling condition ensures that this marginal can be unambiguously identified as $\{P(a_1|x_1)\rho_{a_1|x_1}\}_{a_1,x_1}$, independently of the actions of the other party, which is essential for formulating broadcasting transformations where all output copies are required to share the same reduced assemblage. Naturally, the above definition can easily be generalised for more Alices and Bobs.

\subsection{The Kullback-Leibler divergence of assemblages}

We proved the no-local-broadcasting theorem for behaviours using information-theoretic techniques: we defined an information measure (which is a monotone for a physically motivated set of transformations), which indicated the unwarranted increase of resources if the behaviour was broadcast, leading to a contradiction. As we shall see, the strategy for the steering case follows the same lines. In this sense, we need to decide upon an appropriate monotone. For obvious reasons, we will start with the quantum version of the Kullback-Leibler divergence.

\begin{defi}
Let $\rho,\sigma\in\mathcal{D}(\mathcal{H})$ be two quantum states. The \emph{quantum Kullback-Leibler divergence} or \emph{quantum relative entropy}, of $\rho$ from $\sigma$ is the following:
\begin{equation}\label{eq:quantum-KL-divergence}
S_{\mathrm{q}}(\rho||\sigma)\!\coloneqq \! \begin{cases} \trace\left[\rho(\log(\rho) - \log(\sigma))\right]\text{,} &\text{if } \rho^0\leq\sigma^0\text{,} \\ +\infty\text{,} &\text{otherwise,} \end{cases}
\end{equation}
where $\rho^0$ and $\sigma^0$ denote the projections onto the supports of these operators, and the convention $\log(0)\coloneqq 0$ is used.
\end{defi}

The quantum KL-divergence was first studied by Umegaki \cite{umegaki_conditional_1962} as the non-commutative extension of the KL-divergence --- eq.~\eqref{eq:KL-divergence}. It also has the same statistical interpretation as its classical analogue: it tells us how different the state $\rho$ is from the state $\sigma$ \cite{hiai_proper_1991, Vedral_RelativeEntropy_2002}.

Analogously to the classical KL-divergence, the quantum KL-divergence is $(i)$ non-negative; $(ii)$  zero if and only if the quantum states are equal; $(iii)$ not symmetric in $\rho$ and $\sigma$; and $(iv)$ can potentially be equal to infinity.

Another interesting property of the quantum KL-divergence is the contractivity over quantum channels \cite{lindblad_completely_1975}. Indeed, for any completely positive trace-preserving (CPTP) map $\mathcal{E}:\mathcal{L}(\mathcal{H})\to\mathcal{L}(\mathcal{H}')$, we have that $S_{\mathrm{q}}(\mathcal{E}(\rho)||\mathcal{E}(\sigma)) \le S_{\mathrm{q}}(\rho||\sigma)$. It is precisely this property that we will use in the following paragraphs.

Before defining the KL-divergence for assemblages, we introduce an alternative representation of any assemblage as a set of classical-quantum states $\{\rho_{FB}(x)\}_{x=1}^{m} \subset \mathcal{D}(\mathcal{H}_{F} \otimes \mathcal{H}_{B})$ such that:
\begin{equation}\label{eq:inputdependentcqstates}
    \rho_{FB}(x) \coloneqq \sum_{a=1}^{o} P(a|x)  \ketbra{a}{a} \otimes \rho_{a|x},
\end{equation}
where $\mathcal{H}_F$ is an auxiliary quantum system and $\{\ket{a}\}_{a=1}^{o}$ is an orthonormal basis in $\mathcal{H}_F$.

Furthermore, if we also take into account the probability distribution of the choices of inputs of part $A$, we can even associate a unique classical-quantum state to an assemblage. Assuming that $\{\pi(x)\}_{x=1}^{m}$ is such a probability distribution, we define $\rho_{EFB} \in \mathcal{D}(\mathcal{H}_{E}\otimes \mathcal{H}_{F} \otimes \mathcal{H}_{B})$ to be the classical-quantum (CQ) state associated to an assemblage $\RR=\{P(a|x)\rho_{a|x}\}_{a,x}$ and a probability distribution $\{\pi(x)\}_{x=1}^{m}$ as follows:
\begin{equation}\label{eq:CQ-state-assemblage}
\begin{split}
    \rho_{EFB} &\coloneqq \sum_{x=1}^{m} \pi(x) \ketbra{x}{x} \otimes  \rho_{FB}(x)\\
    &=  \sum_{x=1}^{m}\sum_{a=1}^{o} \pi(x)P(a|x) \ketbra{x}{x} \otimes \ketbra{a}{a} \otimes \rho_{a|x}.
\end{split}
\end{equation}

The sets of vectors $\{\ket{x}\}_{x=1}^{m}$ and $\{\ket{a}\}_{a=1}^{o}$ are orthonormal bases of the auxiliary Hilbert spaces $\mathcal{H}_{E}, \mathcal{H}_{F}$, respectively. The states $\ket{x}$ and $\ket{a}$ are merely abstract flag states to express the inputs and outputs of part $A$, respectively, and do not describe the system inside the box on $A$'s side~\cite{PhysRevA.79.032336, PhysRevX.5.041008}.

Using this representation, one can define the \textit{KL-divergence for assemblages} \cite{PhysRevX.5.041008,PhysRevA.96.022332}.
\begin{defi}
    Let $\RR$ and $\ZZ$ be two assemblages from the same $(1,1,m,o,d)$-scenario, and let $\pi=\{\pi(x)\}_{x}$ be a generic probability distribution on the inputs of Alice. The \emph{KL-divergence of assemblage $\RR$ from $\ZZ$} is the following:
\begin{equation}
      S_{\mathrm{a}}(\RR||\ZZ) \coloneqq \sup_{\pi} S_{\mathrm{q}}(\rho_{EFB}||\sigma_{EFB}),
\end{equation}
    where $\rho_{EFB}$ and $\sigma_{EFB}$ are the classical-quantum states associated to $\RR$ and $\ZZ$, respectively, using the probability distribution $\{\pi(x)\}_{x}$.
\end{defi}

Naturally, the above definition is easy to generalise to $(N_1,N_2,m,o,d)$-scenarios.

\subsection{Broadcasting of an assemblage}

Similarly to the previously discussed definitions of broadcasting quantum states and behaviours, in the case of assemblages, the respective marginals of the broadcast version of an assemblage must be the same as those of the original assemblage. That is, for the four-partite broadcast version of the assemblage with parties $A_0,A_1,B_0,B_1$, the marginals for $A_0,B_0$ and $A_1,B_1$ have to be the same as the original assemblage.

\begin{defi}[Broadcasting of an assemblage]
An assemblage $\RR'=\{P'(a_0,a_1|x_0,x_1)\rho'_{a_0,a_1|x_0,x_1}\}_{a_0,a_1,x_0,x_1}$ from the $(2,2,m,o,d)$ steering scenario, with agents $A_0,A_1,B_0,B_1$ is called a broadcast version of the assemblage $\RR=\{P(a|x)\rho_{a|x}\}_{a,x}$ from the $(1,1,m,o,d)$ steering scenario with agents $A,B$ if:
\begin{subequations}\label{eq:BroadcastingAssemblages2}
\begin{equation}\smalltag
     \small \sum_{a_1}\! P'(a_0,a_1|x_0,x_1)\!\tr_{B_1}\!(\rho'_{a_0,a_1|x_0,x_1}\!) \! =\!\! P(a_0|x_0)\rho_{a_0|x_0}\text{,}
\end{equation}
and
\begin{equation}\smalltag
     \small \sum_{a_0}\! P'(a_0,a_1|x_0,x_1)\!\tr_{B_0}\!(\rho'_{a_0,a_1|x_0,x_1}\!) \! =\!\! P(a_1|x_1)\rho_{a_1|x_1}\text{,}
\end{equation}
\end{subequations}
where the above equations must be true for every choice of inputs and outputs.
\label{Def.BroadcastingForAssemblages}
\end{defi}

Naturally, the above definition is easy to generalise to $(N_1,N_2,m,o,d)$-scenarios.

Notice that while the broadcasting of behaviours and states is defined via marginalising or tracing out larger objects into smaller ones, our definition uses both marginalisation and partial tracing. This combination should be viewed as reflecting the fact that assemblages are situated between behaviours and states. 

We should notice that eqs.~\eqref{eq:BroadcastingAssemblages2} imply that the assemblage is non-signalling in the $A_1\to A_0B_0$ and $A_0\to A_1B_1$ directions. Recalling the notion of non-signalling for behaviours, this implies that the behaviour shared by $A_0$ and $A_1$ is non-signalling with respect to the $A_0|A_1$ partition. For ease of notation, we shall say that such a behaviour is non-signalling in the $A_0B_0|A_1B_1$ split, since these marginals can be defined independently of the inputs of the other party.

If the original assemblage $\RR$ was non-signalling, this also implies that the assemblage is non-signalling in the $A_0A_1\to B_0$ and $A_0A_1\to B_1$ directions. Note, however, that even with this requirement, signalling cannot be ruled out in the $A_0A_1\to B_0B_1$ direction, as one could have different quantum states dependent on the choices of the Alices, while their marginals are always the same (e.g.~different mixtures of maximally entangled pure states all having the maximally mixed state as marginals).

Finally, note that, trivially, a broadcast version can always be found for any assemblage. However, this paper explores whether a local-type transformation exists that can broadcast a known steerable assemblage.

\subsection{Transforming assemblages into assemblages}

Analogously to the case of correlation scenarios, we are interested in transformations that do not require communication between parts $A$ and $B$. Nevertheless, any kind of local operations and classical correlations are, in principle, allowed between them. In this way, we re-enter the LOSR paradigm, with an important difference compared to the case of behaviours: Bob, now, acts on a quantum state \cite{zjawin2023quantifyingepr}. 
\begin{defi}\label{def:LOSRa}
    A map $\mathcal{M}$ from the $(1,1,m,o,d)$ steering scenario to the $(1,1,m',o',d')$ steering scenario is a \emph{LOSR transformation between assemblages} if there exists a probability distribution $\{r(\lambda)\}_\lambda$ over an exogenous variable $\lambda$, and a pre-processing behaviour $\{I(c,x|\chi,\lambda)\}_{c,x,\chi,\lambda}$ and CPTP maps $\{\mathcal{E}_{\lambda}\}_\lambda$, along with a post-processing behaviour $\{O(\alpha|a,c)\}_{\alpha,a,c}$, such that for an input assemblage $\RR=\{P(a|x)\rho_{a|x}\}_{a,x}$ the output is:
    \begin{align}\label{eq:LOSRtrans-bipartite} 
    &P'(\alpha|\chi)\rho'_{\alpha|\chi}\coloneqq\mathcal{M}(\RR)(\alpha|\chi)=\\ \nonumber
    &=\sum_{\lambda,c}\sum_{a,x}r(\lambda)I(c,x|\chi,\lambda)P(a|x)O(\alpha|a,c)\mathcal{E}_\lambda(\rho_{a|x})\text{.}
    \end{align}
\end{defi}

See Fig.~\ref{fig:test2} for an illustration. Note that the internal variable $c$ can only be a function of $\chi$ and $\lambda$, and could be replaced by them. Note also that given the explicit convex combination structure, this is the most general version of LOSR transformations for assemblages \cite{zjawin2023quantifyingepr}.

\begin{center}
  \begin{figure}[ht]
  \includegraphics[width=\linewidth]{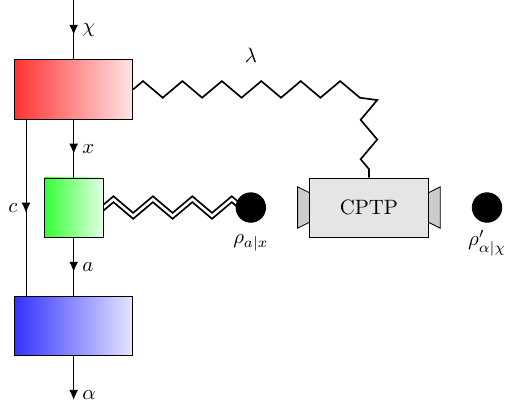}
    \caption{Illustration of an LOSR transformation of an assemblage. The green box and the black dot in the middle represent the original assemblage with input $x$ and output $a$ and related quantum state $\rho_{a|x}$, with the two wavy lines between the two parts representing possibly stronger than classical correlations. The red box represents the preprocessing probability distribution $I$, with a wavy line ($\lambda$) representing the classical correlation linking it with the CPTP map $\mathcal{E}_\lambda$. The blue box represents the post-processing probability distribution $O$. The final assemblage has input $\chi$, outputs $\alpha$, and the related quantum state $\rho'_{\alpha|\chi}$ (colours online).}
  \label{fig:test2}
  \end{figure}
\end{center}

Similarly to the case of behaviours, one wishes to discuss bipartite to quadripartite steering maps to enable broadcasting. To do this, one can think of treating the input $\chi$ and the output $\alpha$ as multiple inputs, e.g.~$(x_0,x_1)$ and $(a_0,a_1)$. Again, some extra care is needed when discussing steerability in this scenario; see \textbf{Section \ref{sec:unsteerability}}.

Similarly to the case of behaviours, where our allowed transformations preserved the local set, the LOSR transformations preserve the set of unsteerable assemblages, i.e.~if $\RR=\{P(a|x)\rho_{a|x}\}_{a,x}$ is unsteerable in the $A|B$ split, then the assemblage $\RR'=\{P'(a_0,a_1|x_0,x_1)\rho'_{a_0,a_1|x_0,x_1}\}_{a_0,a_1,x_0,x_1} \coloneqq \mathcal{M}(\RR)$ will also be unsteerable in the $A_0A_1|B_1B_2$ partition \cite{zjawin2023quantifyingepr}. Likewise, the set of transformations that preserves the unsteerable assemblages is larger than the set of LOSR transformations. In particular, a relevant class of operations that preserve the set of unsteerable assemblages are the one-way LOCC transformations (1W-LOCC) \cite{PhysRevX.5.041008}. These transformations require communication. However, communication occurs only from Bob to Alice, which prevents steerability from being created. Both sets of free transformations, 1W-LOCC and LOSR, are natural free operations for the steering scenario. However, in this work, we will focus solely on LOSR transformations.

\subsection{Unsteerability in the broadcasting scenario} \label{sec:unsteerability}

In the previous section, we have highlighted that LOSR transformations for assemblages preserve unsteerability from the bipartite $A|B$ partition into the quadripartite $A_0A_1|B_0B_1$ partition. 

However, analogously to the broadcasting scenario of behaviours, (see \textbf{Section \ref{subsection:NL in broadcasting scenario}}), \textbf{Definition \ref{def:LOSRa}} does not have any requirements on the marginals given an unsteerable input. Yet, well-defined marginals are needed if we want to meaningfully discuss broadcasting. To this end, we introduce the set of \emph{Unsteerable Realistic Non-Signalling assemblages}. This set consists of assemblages that are unsteerable in the partition $A_0A_1|B_0B_1$, and, in addition, the marginals have non-signalling components between $A_0$ and $A_1$.  More formally,
\begin{defi}\label{def:urns}
    An assemblage $\ZZ\!=\!\{Q(\bm{a}|\bm{x})\sigma_{\bm{a}|\bm{x}}\}_{\bm{a},\bm{x}}$ from the (2,2,m,o,d) steering scenario is \emph{unsteerable realistic non-signalling}, if there exists a probability distribution $\{r(\lambda)\}_\lambda$ over an exogenous variable $\lambda$, together with a set of quantum states $\{\sigma(\lambda)\}_\lambda\subset \mathcal{D}(\mathcal{H}_{B_0}\otimes\mathcal{H}_{B_1})$ and a behaviour $\{Q(a_0,a_1|x_0,x_1,\lambda)\}_{a_0,a_1,x_0,x_1,\lambda}$ that is non-signalling in the $A_0|A_1$ partition for every $\lambda$, such that every element of $\ZZ$ can be decomposed as:
    \begin{equation}
    \begin{split}
	Q(a_0,a_1|x_0,x_1)\sigma_{a_0,a_1|x_0,x_1}=\\
    =\sum_{\lambda} r(\lambda) Q(a_0,a_1|x_0,x_1,\lambda)\sigma(\lambda)\text{.}
    \end{split}
    \end{equation}
    The set of all assemblages with such properties will be denoted by $\text{UR}_{ns}$.
    \label{Def.UnsteerableRealisticDecomp}
\end{defi}

Note that this means a quadripartite unsteerable, realistic, non-signalling behaviour is unsteerable in every possible partition of participants. This set will be considered unsteerable in the broadcasting scenario.

Having introduced the set of assemblages of interest --- the set $\mathrm{UR_{ns}}$--- and a divergence, we can introduce a relative entropy in the steering scenario. Such a definition was initially introduced in Ref.~\cite{PhysRevX.5.041008}. However, in Ref.~\cite{PhysRevX.5.041008}, the set of free operations used is 1W-LOCC, whereas in our work, we focus on LOSR transformations. This enables us to introduce a simplified version of the relative entropy introduced in Ref.~\cite{PhysRevX.5.041008}--- we refer to Ref.~\cite{PhysRevA.96.022332} for more details.

\begin{defi}
    The \emph{relative entropy of steering} of an assemblage $\RR$ from the $(2,2,m,o,d)$ steering scenario is defined as:
    \begin{equation}
    E_{\mathrm{UR}}\left(\RR\right)\coloneqq \inf_{\ZZ \in \mathrm{UR_{ns}}} S_{\mathrm{a}}(\RR||\ZZ), 
    \end{equation}
    where the infimum is taken over all possible assemblages $\ZZ$ from the $(2,2,m,o,d)$ steering scenario that admit an unsteerable realistic non-signalling decomposition.
\end{defi}

Note that, with a slight abuse of notation, we can extend the relative entropy of nonlocality to behaviours from the $(2,m,o)$-scenario as well by treating the local realistic non-signalling set ($\mathrm{LR_{ns}}$) as the local set. This respects the properties of the above defined entropy since the marginals of a $\mathrm{LR}_{ns}$ behaviour are local.

Note that, with a slight abuse of notation, we can extend the relative entropy of steering to assemblages from the $(1,1,m,o,d)$ steering scenario as well by treating the unsteerable realistic non-signalling set ($\mathrm{UR_{ns}}$) as the unsteerable set. This respects the properties of the above-defined entropy since the marginals of a $\mathrm{UR}_{ns}$ assemblage are unsteerable.

Given the above definition of non-steerability in the broadcasting scenario, an additional requirement for the LOSR transformations to be admissible as local transformations is for them to preserve this definition of non-steerability:

\begin{defi}
A transformation of assemblages is said to be a \textit{$\text{UR}_{ns}$-LOSR assemblage transformation} when $(i)$ it belongs to the set of LOSR assemblage transformations and $(ii)$ it transforms unsteerable assemblages into assemblages in $\text{UR}_{ns}$.
\end{defi}

\subsection{No-local-broadcasting theorem for steerable non-signalling assemblages}

With this framework in place, we now establish our main no-go result for assemblages. The proof of it is very similar to the proof of \textbf{Theorem \ref{Thm.ImpossibilityBroadcastBoxes}} as the argument is based on a contradiction. To this end, two key propositions are presented that closely parallel the ones presented for behaviours.

The first key ingredient in the proof relies on the contractivity of the relative entropy of steering under LOSR transformations. This fundamental property, originally established in Refs.~\cite{PhysRevX.5.041008, PhysRevA.96.022332}, ensures that any LOSR transformation cannot increase the relative entropy of steering. For completeness, we restate this result:
\begin{prop}\label{prop:ContractivityEntropyAssemblages}
    Let $\mathcal{M}$ be a LOSR transformation and let $\RR$ be an assemblage, then:
\begin{equation}
     E_{\mathrm{UR}}(\mathcal{M}(\RR)) \le  E_{\mathrm{UR}}(\RR).
\end{equation}
\end{prop}

The second key ingredient is that if $\RR'$ is a broadcast version of a steerable assemblage $\RR$, then its relative entropy of steering must be strictly greater than that of $\RR$. The proof of this result parallels \textbf{Proposition \ref{prop:EntropyDilatation}}, with the key distinction that, instead of relying on the chain rule, we employ Theorem 1 from Ref.~\cite{piani2009relative}.

More formally, we establish the following:
\begin{restatable}{prop}{propDilationEntropyAssemblages}\label{prop:DilationEntropyAssemblages}
    If $\RR'$ is a broadcast version of a steerable assemblage $\RR$, then:
      \begin{equation}
          E_{\mathrm{UR}}(\RR') >  E_{\mathrm{UR}}(\RR).
      \end{equation}
\end{restatable}
The proof of this proposition is detailed in \textbf{Appendix} \ref{sec:appendix-proof-prop:DilatationAssemblages}.

We can now establish the second main result of this
work.

\begin{restatable}{thm}{ThrNonBroadcastingAssemblages}[No-local-broadcasting of assemblages]\label{thm:NonBroadcasting_Assemblages}
It is impossible to locally broadcast any known steerable non-signalling assemblage using $\text{UR}_{ns}$-LOSR transformations.
\end{restatable}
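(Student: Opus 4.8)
The plan is to mirror, step for step, the argument used for boxes in Theorem~\ref{Thm.ImpossibilityBroadcastBoxes}, replacing the relative entropy of nonlocality $E_{\mathrm{LR}}$ by the relative entropy of steering $E_A$ and the box KL-divergence $S_b$ by the assemblage KL-divergence $S_A$. Concretely, I would reduce the theorem to two propositions---a contractivity (monotonicity) statement and a strict-dilation statement---and then close the argument by contradiction exactly as in the proof of Theorem~\ref{Thm.ImpossibilityBroadcastBoxes}.

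First I would prove that $E_A$ is contractive under $UR_{ns}$-LOSR transformations: if $\mathcal{M}$ is $UR_{ns}$-LOSR then $E_A(\mathcal{M}(\{\varrho_{a|x}\})) \le E_A(\{\varrho_{a|x}\})$. This rests on two facts already available. On one hand, $UR_{ns}$-LOSR maps preserve $UR_{ns}$ by definition, so the image under $\mathcal{M}$ of the optimal unsteerable-realistic assemblage for $\{\varrho_{a|x}\}$ is again an admissible competitor in the infimum defining $E_A$ of the output. On the other hand, $S_A$ is contractive under LOSR maps, which descends from the contractivity of the quantum relative entropy $S_Q$ under CPTP maps once one writes the LOSR action of eq.~\eqref{eq:LOSRtrans-bipartite} as a completely positive trace-preserving channel acting on the classical--quantum state $\rho_{XAB}$ of eq.~\eqref{eq:CQ-state-assemblage}, taking some care with the supremum over $\pi(x)$ so that the input distribution is matched on both arguments. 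Chaining the two observations yields monotonicity, in complete analogy with Proposition~\ref{prop:EntropyContractivity}.

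The substantive step is the strict-dilation proposition: if $\{\varrho_{a_0a_1|x_0x_1}\}$ is a broadcast of a steerable assemblage $\{\varrho'_{a|x}\}$, then $E_A(\{\varrho_{a_0a_1|x_0x_1}\}) > E_A(\{\varrho'_{a|x}\})$. Here I would adapt the chain rule for the relative entropy to the assemblage setting, with $S_Q$ now playing the role that the classical $S$ played for boxes, exploiting the classical--quantum structure of $\rho_{XAB}$ (the $A$-flags are classical while $B$ is genuinely quantum) so that the relative entropy to the nearest $UR_{ns}$ assemblage splits into a marginal term plus a conditional term. The first term should reproduce $E_A$ of the marginal $A_0B_0$, which by the broadcasting condition of Def.~\ref{Def.BroadcastingForAssemblages} equals $\{\varrho'_{a|x}\}$; the conditional term is non-negative and must be shown to be strictly positive precisely when $\{\varrho'_{a|x}\}$ is steerable. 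This is the crux and, I expect, the main obstacle: one has to argue that a genuinely steerable marginal forces the optimal unsteerable-realistic approximation of the joint assemblage to pay a strictly positive conditional cost, which is where steerability (rather than mere nonlocality) and the non-signalling constraint on the $UR_{ns}$ decomposition enter, combining the box reasoning behind Proposition~\ref{prop:EntropyDilatation} with the state-based no-local-broadcasting argument of \cite{piani2009relative}. The quantum (noncommutative) character of the $B$-register and the equality conditions under the partial traces of Def.~\ref{Def.BroadcastingForAssemblages} are the delicate points that do not reduce to the purely classical box estimate.

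With both propositions in hand the theorem follows immediately by contradiction: assuming a $UR_{ns}$-LOSR map $\mathcal{M}$ broadcasts a steerable $\{\varrho'_{a|x}\}$, contractivity gives $E_A(\mathcal{M}(\{\varrho'_{a|x}\})) \le E_A(\{\varrho'_{a|x}\})$ while dilation gives $E_A(\mathcal{M}(\{\varrho'_{a|x}\})) > E_A(\{\varrho'_{a|x}\})$, whence $E_A(\{\varrho'_{a|x}\}) < E_A(\{\varrho'_{a|x}\})$, a contradiction, exactly as in the boxes case.
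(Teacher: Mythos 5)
Your overall architecture coincides with the paper's: reduce the theorem to a contractivity statement (Proposition~\ref{prop:ContractivityEntropyAssemblages}) and a strict-dilation statement (Proposition~\ref{prop:DilationEntropyAssemblages}), then close by deriving $E_A(\{\varrho'_{a|x}\}) < E_A(\{\varrho'_{a|x}\})$. The contractivity half and the final contradiction are exactly as in the paper and are fine as you describe them.

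The gap is in how you propose to execute the dilation step. You want to ``adapt the chain rule'' with $S_Q$ in place of $S$, so that the first term reproduces $E_A$ of the marginal $A_0B_0$ and an average conditional term is shown to be strictly positive. But the quantum relative entropy admits no exact chain rule of the classical form: there is no identity $S_Q(\rho_{ZW}||\sigma_{ZW}) = S_Q(\rho_Z||\sigma_Z) + (\text{average conditional term})$, and conditioning on the classical flags does not rescue this because the $B_0$ register is genuinely quantum. The paper's substitute is Piani's inequality, eq.~\eqref{eq:Piani's_Result}, and with it the two terms play the \emph{opposite} roles from the ones you assign. One measures $Z=X_0A_0B_0$ with an informationally complete POVM, and it is the measured term $S_Q(\mathcal{F}(\rho_Z)||\mathcal{F}(\sigma_Z))$ that carries the strict positivity, via Lemma~\ref{lemma:injectivity_ICPOVMs} (the IC-POVM measurement map is injective on assemblages, and a steerable marginal cannot equal the unsteerable marginal of a $UR_{ns}$ assemblage). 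The unmeasured term $S_Q\left(\rho_W||\sum_k\alpha_k\sigma_W^k\right)$ is the one that recovers $E_A$ of the marginal $A_1B_1$, and this requires Lemmas~\ref{lemma:post_measurement_assemblages_unsteerable} and \ref{lemma:convex_sum_CQ_assemblages} showing that the post-measurement average $\sum_k\alpha_k\sigma_W^k$ is still the CQ-state of an unsteerable assemblage---which is precisely where the non-signalling constraint in the $UR_{ns}$ decomposition is used. You correctly flag this step as the crux and point to \cite{piani2009relative}, but as written your allocation of ``marginal term'' versus ``strictly positive term'' would not go through; the argument must be run the other way around (equivalently, with the roles of the two copies exchanged). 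Two smaller omissions: one must restrict the supremum to product input distributions $\pi_0(x_0)\pi_1(x_1)$ in order to split the two contributions, and one needs attainment of the infimum over $UR_{ns}$ (the analogue of Lemma~\ref{lemma:MinEntropy}) to speak of the optimal $\bar{\varsigma}$.
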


\begin{proof}
 Let us suppose, by contradiction, that there is an $\mathrm{UR_{ns}}$-LOSR transformation $\mathcal{M}$ that broadcasts a known steerable assemblage $\RR$. Then, by \textbf{Proposition \ref{prop:ContractivityEntropyAssemblages}}, we have:
\begin{equation}\label{eq:ThmEntropyContractivity_Assemblages}
        E_{\mathrm{UR}}\left(\mathcal{M}(\RR)\right) \le  E_{\mathrm{UR}}(\RR).
    \end{equation}
On the other hand, as $\mathcal{M}(\RR)$ is a broadcast version of $\RR$, by \textbf{Proposition \ref{prop:DilationEntropyAssemblages}}:
\begin{equation}\label{eq:ThmEntropyDilation_Assemblages}
        E_{\mathrm{UR}}(\RR) <  E_{\mathrm{UR}}\left(\mathcal{M}(\RR)\right).
    \end{equation}
Therefore, by Equations \eqref{eq:ThmEntropyContractivity_Assemblages} and \eqref{eq:ThmEntropyDilation_Assemblages}:
    \begin{equation}
         E_{\mathrm{UR}}(\mathcal{M}(\RR)) <  E_{\mathrm{UR}}(\mathcal{M}(\RR)),
    \end{equation}
which is a contradiction. Thus, it is impossible to broadcast any known steerable assemblage by $\mathrm{UR_{ns}}$-LOSR transformations. 
\end{proof}

\section{Discussion}

In this work, we provided negative answers for the possibility of local broadcasting of either nonlocal behaviours or steerable assemblages by using fundamental properties of the relative entropy of nonlocality and the relative entropy of steering. Answering, therefore, the long-standing conjecture raised by Joshi, Grudka and Horodecki$^{\otimes 4}$.

Similarly to the quantum case, we concluded that the local copying of information from assemblages and behaviours is impossible, provided that the object in question possesses a non-classical resource such as steerability or nonlocality. Since non-steerable assemblages and local behaviours are locally broadcastable,  we may say that the failure of local broadcasting indicates the non-classicality of the correlations. 

As we alluded to in the introduction, the fact that we were able to prove alternative forms of the no-broadcasting theorem for other resources underscores the importance of the standard version of the theorem. The impossibility of copying unknown information is not distinctive of the formalism of quantum theory, but of a broader class of probabilistic models satisfying no-signalling criteria. The reference \cite{Zhu_2022} also dealt with the problem of no-cloning in EPR scenarios. However, the definitions of cloning and the allowed operations used there differ from those used in our manuscript.

Additionally, we want to emphasise that the proofs presented in this work only encompass the bipartite-to-quadripartite scenarios. Although this may sound restrictive, the authors believe that the ideas contained in this article are already enough for a generalisation.

Generalised versions of the local no-broadcasting theorem might also exist for any non-classical finite-dimensional probabilistic model satisfying a no-signalling criterion. It is also possible that such a generalised local no-broadcasting theorem is equivalent to the original generalised no-broadcasting theorem of Barnum, Barrett, Leifer, and Wilce. We leave those points as open questions to be investigated in the future.

Finally, although we have formulated our no-local-broadcasting result for assemblages in terms of LOSR transformations, it is worth noting that the impossibility persists if one considers the larger class of one-way LOCC operations (1W-LOCC). Indeed, the main ingredient used in the proof, Proposition~3, remains valid for 1W-LOCC transformations. Our focus on LOSR is motivated by the desire to maintain a close analogy with the behaviour-based results and by the fact that LOSR operations are commonly regarded as the natural free operations in the resource-theoretic framework of steering \cite{zjawin2023quantifyingepr}.

\vspace{0.5cm}

\begin{acknowledgments}
AS and CV are grateful to the International Institute of Physics, where the groundwork for this project was completed. CV and CD thank the HUN-REN Wigner RCP for the welcoming environment in which part of this project was developed.  This research was also supported by the São Paulo Research Foundation (Fapesp) under grant no.~2024/16657-3 and 2025/01058-0, by the Fetzer Franklin Fund of the John E.\ Fetzer Memorial Trust and by grant number FQXi-RFP-IPW-1905 from the Foundational Questions Institute and Fetzer Franklin Fund, a donor-advised fund of Silicon Valley Community Foundation. Financial support from the Brazilian agency Coordenação de Aperfeiçoamento de Pessoal de Nível Superior - CAPES is gratefully acknowledged. 
This work has also been supported by the Ministry of Innovation and Technology and the National Research, Development and Innovation Office (NKFIH) within the Quantum Information National Laboratory of Hungary and through
OTKA Grants FK 135220, K 124152, and K 124351. This work was supported by the Conselho Nacional de Desenvolvimento Científico e Tecnológico (CNPq), through a grant from the Conhecimento Brasil Program - Line 1 and Line 2. 
\end{acknowledgments}

\bibliography{references}
\begin{widetext}
\appendix

\section{Proof of Proposition \ref{prop:EntropyContractivity}}\label{sec:appendix-proof-prop:EntropyContractivity}
\propEntropyContractivity*
\begin{proof}
    By definition,
    \begin{equation}
        E_{\mathrm{LR}}(\mathcal{M}(\PP)) = \inf_{\QQ \in \mathrm{LR_{ns}}} S_{\mathrm{b}}(\mathcal{M}(\PP)|| \QQ)\text{.}
    \end{equation}
    
    Since $\mathcal{M}$ is $\mathrm{LR_{ns}}$ preserving, the image of $\mathcal{M}$ over the set of local behaviours is a subset of the $\mathrm{LR_{ns}}$ set. Therefore, if we take the infimum of the above equation over the set $\mathrm{Im}(\mathcal{M})$, we will get an upper bound for $E_{\mathrm{LR}}(\mathcal{M}(\PP))$. Indeed,
    \begin{equation}
    \begin{split}
        E_{\mathrm{LR}}(\mathcal{M}(\PP)) &\le \inf_{\QQ \in \mathrm{Im}(\mathcal{M})} S_{\mathrm{b}}(\mathcal{M}(\PP)|| \QQ)\\
        &= \inf_{\tilde{\QQ} \in \mathrm{LR_{ns}}} S_{\mathrm{b}}(\mathcal{M}(\PP)|| \mathcal{M}(\tilde{\QQ})).
    \end{split}
    \end{equation}
    Now, by the contractivity of $S_{\mathrm{b}}$ over LOSR transformations for non-signalling behaviours --- eq.~\eqref{eq:ContracitivityDivergence}, we have:
    \begin{equation}
    \begin{split}
        E_{\mathrm{LR}}(\mathcal{M}(\PP)) &\le \inf_{\tilde{\QQ} \in \mathrm{LR_{ns}}} S_{\mathrm{b}}(\PP||\tilde{\QQ})\\
        &= E_{\mathrm{LR}}(\PP).
    \end{split}
    \end{equation}
\end{proof}

\section{Proof of Proposition \ref{prop:EntropyDilatation}}\label{sec:appendix-proof-prop:EntropyDilatation}

In this appendix, we prove the following proposition:
\propEntropyDilation*

The proof uses the \textit{chain rule}, which we first introduced for simple probability distributions. To make it easier to follow, the parties involved in the different behaviours or probability distributions will be noted as lower indices throughout this appendix. Given two joint probability distributions $\{p_{A_0A_1}(a_0,a_1)\}_{a_0,a_1}$ and $\{q_{A_0A_1}(a_0,a_1)\}_{a_0,a_1}$ over the same input space, we define the conditional probability distributions $\{p_{A_1|A_0}(a_1|a_0)\}_{a_1}$ and $\{q_{A_1|A_0}(a_1|a_0)\}_{a_1}$ for each $a_0$ they can be defined for. An interesting identity relating the KL-divergence of the joint probability distribution with the KL-divergence of the marginal and the average KL-divergence of the conditional (whenever it makes sense) is given by:
\begin{equation}
    S(p_{A_0A_1}||q_{A_0A_1}) = S(p_{A_0}||q_{A_0}) + \sum_{a_0} p_{A_0}(a_0) S\left(p_{A_1|A_0}(\,.\,|a_0)||q_{A_1|A_0}(\,.\,|a_0)\right).
\end{equation}
This identity is known as the \textit{chain rule} \cite{kullback_information_1997, cover1999elements}. Note that if, for a given $\underline{a_0}$, the conditional probability $\{p_{A_1|A_0}(a_1|\underline{a_0})\}_{a_1}$ is undefined, then for all $a_1$ the well-defined corresponding contributions to the KL-divergence on both sides equate to zero and thus every $(\underline{a_0},a_1)$ index pair can simply be disregarded. However, if only the conditional probability $\{q_{A_1|A_0}(a_1|\underline{a_0})\}_{a_1}$ is undefined, then the well-defined corresponding contributions to the KL-divergence on both sides are infinite. In this case, the chain rule is vacuous.

Now we go on to adapt the chain rule to behaviours. Let $\PP_{A_0A_1B_0B_1}=\{P_{A_0A_1B_0B_1}(\bm{a},\bm{b}|\bm{x},\bm{y})\}_{\bm{a},\bm{b},\bm{x},\bm{y}}$ and $\QQ_{A_0A_1B_0B_1}=\{Q_{A_0A_1B_0B_1}(\bm{a},\bm{b}|\bm{x},\bm{y})\}_{\bm{a},\bm{b},\bm{x},\bm{y}}$ be non-signalling behaviours in the $A_0B_0|A_1B_1$ split over the same input and output spaces. For shorthand, let us introduce the indices $0=A_0B_0$ and $1=A_1B_1$. If we fix the inputs $\underline{\bm{x}},\underline{\bm{y}}$, then we can apply the chain rule for $S(\PP_{01}(.\,,.|\underline{\bm{x}},\underline{\bm{y}})||\QQ_{01}(.\,,.|\underline{\bm{x}},\underline{\bm{y}}))$ in the $A_0B_0|A_1B_1$ split, resulting in:
    \begin{equation}
    \begin{split}\label{eq:ChainRuleBox}
       S(\PP_{01}(.\,,.|\underline{\bm{x}},\underline{\bm{y}})||\QQ_{01}(.\,,.|\underline{\bm{x}},\underline{\bm{y}})) =& \;S(\PP_{0}(.\,,.|\underline{x_0}, \underline{y_0})||\QQ_{0}(.\,,.|\underline{x_0}, \underline{y_0}))\\
       &+  \sum_{a_0,b_0} P_{0}(a_0,b_0|x_0,y_0)S(\PP_{1|0}(.\,,.|\underline{\bm{x}},\underline{\bm{y}}, a_0, b_0)||\QQ_{1|0}(.\,,.|\underline{\bm{x}},\underline{\bm{y}}, a_0, b_0)),
    \end{split}
    \end{equation}
where in eq.~\eqref{eq:ChainRuleBox} we have used the non-signalling feature of $\PP_{01}$ and $\QQ_{01}$.

Notice how Equation \eqref{eq:ChainRuleBox} contains the seed to prove Equation \eqref{eq:major}: the left-hand side pertains to the KL-divergence of the broadcast version, while the right-hand side pertains to the KL-divergence of the original version plus conditionals. In order to prove \textbf{Proposition \ref{prop:EntropyDilatation}}, we shall prove that these additional terms related to the conditionals do not vanish when the original behaviour was nonlocal. The proof is based on five lemmas. \textbf{Lemma \ref{lemma:ConditionalDivergenceBiggerZero}} is the backbone of the proof of \textbf{Proposition \ref{prop:EntropyDilatation}} as it shows that a non-vanishing term always remains, and \textbf{Lemmas \ref{lemma:MarginalNL}} and \textbf{\ref{lemma:MarginalLR}} support the proof of \textbf{Lemma \ref{lemma:ConditionalDivergenceBiggerZero}}. Meanwhile, \textbf{Lemmas \ref{lemma:finiteness}} and \textbf{\ref{lemma:MinEntropy}} are about the existence of a concrete, well-behaved, $\mathrm{LR_{ns}}$ behaviour that attains the entropy of nonlocality so that we can make calculations with it.

The following lemma is the first supporting lemma for \textbf{Lemma \ref{lemma:ConditionalDivergenceBiggerZero}}.

\begin{lemma}\label{lemma:MarginalNL}
    Let $\PP_{01}$ be a non-signalling behaviour in the $A_0B_0|A_1B_1$ split such that the marginal $\PP_{1}$ is nonlocal in the $A_1|B_1$ split. Then, for every input $x_0,y_0$, there exist outputs $a_0,b_0$ such that $P_{0}(a_0, b_0|x_0,y_0) \neq 0$ and the associated conditional behaviour $\PP_{1|0}(a_0,b_0,x_0,y_0)\coloneqq\{P_{1|0}(a_1, b_1|\bm{x},\bm{y}, a_0, b_0)\}_{a_1,b_1,x_1,y_1}$ is nonlocal in the $A_1|B_1$ split. 
\end{lemma}

In the following, the inputs $a_0,b_0,x_0,y_0$ for the associated conditional behaviour $\PP_{1|0}$ will sometimes be omitted.

\begin{proof}
    First, note that given inputs $x_0,y_0$ there exists at least one set of outputs $a_0,b_0$ such that $P_0(a_0,b_0|x_0,y_0)\neq 0$, otherwise we would not get a probability distribution out of $\PP_{01}$. Let us fix the inputs $x_0,y_0$ and choose a set of outputs $a_0,b_0$ such that  $P_0(a_0,b_0|x_0,y_0)\neq 0$. By the definition of a conditional probability distribution, we have:
    \begin{equation}
        P_{1|0}(a_1,b_1|\bm{x},\bm{y},a_0,b_0) \coloneqq \frac{P_{01}(a_0,a_1, b_0,b_1|\bm{x},\bm{y})}{P_{0}(a_0, b_0|x_0,y_0)}\text{.}
    \end{equation}
    Thus,
    \begin{equation}
        P_{01}(a_0,a_1, b_0,b_1|\bm{x},\bm{y}) = P_{0}(a_0, b_0|x_0,y_0)P_{1|0}(a_1,b_1|\bm{x},\bm{y},a_0,b_0).
    \end{equation}
    In this way, summing the above equation over the indices $a_0,b_0$, we have, using the non-signalling property of $\PP_{01}$:
    \begin{equation}
        P_{1}(a_1, b_1|x_1,y_1) = \sum_{a_0,b_0} P_{01}(a_0,a_1, b_0,b_1|\bm{x},\bm{y})  = \sum_{a_0,b_0} P_{0}(a_0, b_0|x_0,y_0)P_{1|0}(a_1,b_1|\bm{x},\bm{y},a_0,b_0).
    \end{equation}
    Note that when the sum hits an $a_0,b_0$ pair for which the conditional is not defined, both $P_{01}(a_0,a_1, b_0,b_1|\bm{x},\bm{y})$ and $P_{0}(a_0, b_0|x_0,y_0)$ are zero, thus such output pairs can be disregarded without loss of generality.
    
    Given the inputs $x_0, y_0$, let us suppose, for contradiction, that for every output $a_0, b_0$ such that $P_{0}(a_0, b_0|x_0,y_0) \neq 0$ we have that $\PP_{1|0}$ is local, but $\PP_{1}$ is nonlocal in the $A_1|B_1$ split. Let us use the notation $w_0\coloneqq (a_0,b_0,x_0,y_0)$. Thus, for each output $a_0, b_0$ such that $P_{0}(a_0, b_0|x_0,y_0) \neq 0$ there exists an exogenous variable $\lambda_{w_0}$ and probability distributions $\{r(\lambda_{w_0}|w_0)\}_{\lambda_{w_0}}$, $\{P_{A_1}(a_1|x_1,\lambda_{w_0},w_0)\}_{a_1,x_1}$ and $\{P_{B_1}(b_1|y_1,\lambda_{w_0},w_0)\}_{b_1,y_1}$ such that for all $a_1,b_1,x_1,y_1$:
   \begin{equation}
       P_{1|0}(a_1,b_1|x_1,y_1,w_0)\coloneqq P_{1|0}(a_1,b_1|\bm{x},\bm{y},a_0,b_0) = \sum_{\lambda_{w_0}} r(\lambda_{w_0}|w_0) P_{A_1}(a_1|x_1, \lambda_{w_0},w_0) P_{B_1}(b_1|y_1,\lambda_{w_0},w_0).
   \end{equation}
   Thus,
   \begin{equation}
   \begin{split}
        P_{1}(a_1, b_1|x_1,y_1) &= \sum_{a_0,b_0} P_{0}(a_0, b_0|x_0,y_0)P_{1|0}(a_1,b_1|\bm{x},\bm{y},a_0,b_0) \\
        &= \sum_{a_0,b_0}  P_{0}(a_0, b_0|x_0,y_0)\sum_{\lambda_{w_0}} r(\lambda_{w_0}|w_0) P_{A_1}(a_1|x_1, \lambda_{w_0},w_0) P_{B_1}(b_1|y_1,\lambda_{w_0},w_0).
    \end{split}
   \end{equation}
   
    Let $\lambda' \coloneqq (a_0, b_0,\lambda_{w_0})$ be a new exogenous variable, and $r'(\lambda'|x_0,y_0) \coloneqq P_0(a_0, b_0|x_0,y_0) r(\lambda_{w_0}|w_0)$. Then it is easy to see that $\{r'(\lambda'|x_0,y_0)\}_{\lambda'}$ is a probability distribution over the variable $\lambda'$, i.e.~$r'(\lambda'|x_0,y_0) \ge 0$ for all $\lambda'$ and $\sum_{\lambda'} r(\lambda'|x_0,y_0) = 1$. We also have:
    \begin{equation}
        P_1(a_1, b_1|x_1,y_1) = \sum_{\lambda'} r'(\lambda'|x_0,y_0)P_{A_1}(a_1|x_1,\lambda',x_0,y_0) P_{B_1}(b_1|y_1,\lambda',x_0,y_0). 
    \end{equation}

    Therefore, $\PP_1$ is local, which is in contradiction with the hypothesis. Given the no-signalling property of $\PP_{01}$, the left-hand side is not dependent on $x_0,y_0$ while the right-hand side is dependent. This means that the above equation must be true for all inputs $x_0,y_0$. Thus, the conclusion is that for any inputs $x_0,y_0$, there has to be at least one output $a_0,b_0$, such that $\PP_{1|0}$ is nonlocal in the $A_1|B_1$ split. 
\end{proof}

The following lemma is the second supporting lemma for \textbf{Lemma \ref{lemma:ConditionalDivergenceBiggerZero}}.

\begin{lemma}\label{lemma:MarginalLR}
    Let $\QQ_{01}$ be a $LR_{ns}$ behaviour. Then for every choice $a_0,b_0,x_0,y_0$ such that $Q_{0}(a_0,b_0|x_0,y_0)\neq 0$, we have that the related conditional behaviour $\QQ_{1|0}$ is local in the $A_1|B_1$ split.
\end{lemma}
\begin{proof}
    Let us start by fixing $w_0\coloneqq (a_0,b_0,x_0,y_0)$ such that $Q_{0}(a_0,b_0|x_0,y_0)\neq 0$. We have, by definition:
    \begin{equation}
        Q_{1|0}(a_1,b_1|\bm{x},\bm{y},a_0,b_0) \coloneqq \frac{Q_{01}(a_0,a_1, b_0,b_1|\bm{x},\bm{y})}{Q_{0}(a_0, b_0|x_0,y_0)}.
    \end{equation}

    We shall show that this is local in the $A_1|B_1$ split by exploiting the $\mathrm{LR_{ns}}$ property of $\QQ_{01}$. As $\QQ_{01}$ is a $\mathrm{LR_{ns}}$ behaviour, we have an exogenous variable $\lambda$, a probability distribution $\{r(\lambda)\}_{\lambda}$ and behaviours $\{\QQ_{A}(\lambda)\}_{\lambda}$ and $\{\QQ_{B}(\lambda)\}_{\lambda}$ such that for all inputs and outputs:
    \begin{equation}
        Q_{01}(\bm{a},\bm{b}|\bm{x},\bm{y}) = \sum_{\lambda} r(\lambda)Q_{A}(a_0, a_1|x_0,x_1,\lambda)Q_{B}(b_0, b_1|y_0,y_1,\lambda),
    \end{equation}
    where $\QQ_{A}(\lambda)$ and $\QQ_{B}(\lambda)$ are non-signalling in, respectively, the $A_0|A_1$ and $B_0|B_1$ splits. The requirement that $Q_{0}(a_0,b_0|x_0,y_0)\neq 0$ gives:
    \begin{equation}
    \begin{split}
        Q_{0}(a_0,b_0|x_0,y_0)=\sum_{a_1,b_1}Q_{01}(\bm{a},\bm{b}|\bm{x},\bm{y}) &= \sum_{a_1,b_1}\sum_{\lambda} r(\lambda)Q_{A}(a_0, a_1|x_0,x_1,\lambda)Q_{B}(b_0, b_1|y_0,y_1,\lambda)\\
        &=\sum_{\lambda} r(\lambda) \left(\sum_{a_1}Q_{A}(a_0, a_1|x_0,x_1,\lambda)\right)\left(\sum_{b_1}Q_{B}(b_0, b_1|y_0,y_1,\lambda)\right)\\
        &=\sum_{\lambda} r(\lambda) Q_{A_0}(a_0|x_0,\lambda) Q_{B_0}(b_0|y_0,\lambda)\neq 0\text{,}
    \end{split}
    \end{equation}
    where we have used the non-signalling property of $\QQ_{A}(\lambda)$ and $\QQ_{B}(\lambda)$ to introduce $\QQ_{A_0}(\lambda)$ and $\QQ_{B_0}(\lambda)$.

    The above equation tells us that for some $\lambda$ neither of the quantities $r(\lambda), Q_{A_0}(a_0|x_0,\lambda) $ and $Q_{B_0}(b_0|y_0,\lambda)$ is zero. Let us restrict ourselves to these $\lambda$ variables that we shall call $\lambda_{w_0}$. Thus, for the purposes of the fixed inputs and outputs $w_0\coloneqq (a_0,b_0,x_0,y_0)$, it is true that:
    \begin{equation}
        Q_{01}(\bm{a},\bm{b}|\bm{x},\bm{y}) = \sum_{\lambda_{w_0}} r(\lambda_{w_0})Q_{A}(a_0, a_1|x_0,x_1,\lambda_{w_0})Q_{B,}(b_0, b_1|y_0,y_1,\lambda_{w_0}),
    \end{equation}
    since for any $\lambda$, such that either the element $Q_{A_0}(a_0|x_0,\lambda)$ or the element $Q_{B_0}(b_0|y_0,\lambda)$ is zero, the corresponding quantity $Q_{A}(a_0, a_1|x_0,x_1,\lambda_{w_0})$ or $Q_{B}(b_0, b_1|y_0,y_1,\lambda_{w_0})$ is also zero for all free variables.

    Given that $Q_{A_0}(a_0|x_0,\lambda_{w_0})\neq 0$ and $Q_{B_0}(b_0|y_0,\lambda_{w_0})\neq 0$ for every $\lambda_{w_0}$ we can define the appropriate conditional behaviours to get:
    \begin{equation}
        Q_{A}(a_0, a_1|x_0,x_1,\lambda_{w_0}) = Q_{A_{1|0}}(a_1|x_0,x_1,a_0,\lambda_{w_0})Q_{A_0}(a_0|x_0,\lambda_{w_0}),
    \end{equation}
    and 
    \begin{equation}
        Q_{B}(b_0, b_1|y_0,y_1,\lambda_{w_0}) = Q_{B_{1|0}}(b_1|y_0,y_1,b_0,\lambda_{w_0})Q_{B_0}(b_0|y_0,\lambda_{w_0}).
    \end{equation}
    Thus, for the fixed $w_0\coloneqq (a_0,b_0,x_0,y_0)$ we can write:
    \begin{equation}
        Q_{01}(\bm{a},\bm{b}|\bm{x},\bm{y}) = \sum_{\lambda_{w_0}} r(\lambda_{w_0})Q_{A_{1|0}}(a_1|x_0,x_1,a_0,\lambda_{w_0})Q_{A_{0}}(a_0|x_0,\lambda_{w_0}) Q_{B_{1|0}}(b_1|y_0,y_1,b_0,\lambda_{w_0})Q_{B_0}(b_0|y_0,\lambda_{w_0})\text{.}
    \end{equation}
    Substituting this into the definition of $\QQ_{1|0}$:
    \begin{equation}
        Q_{1|0}(a_1,\!b_1|\bm{x},\!\bm{y},\!a_0,\!b_0)\! = \!\sum_{\lambda_{w_0}} \!\frac{r(\lambda_{w_0})Q_{A_0}\!(a_0|x_0,\!\lambda_{w_0})Q_{B_0}\!(b_0|y_0,\!\lambda_{w_0})}{Q_{0}(a_0, b_0|x_0,y_0)}Q_{A_{1|0}}(a_1|x_0,\!x_1,\!a_0,\!\lambda_{w_0}) Q_{B_{1|0}}(b_1|y_0,\!y_1,\!b_0,\!\lambda_{w_0})\text{.}
    \end{equation}
    We shall define the following new probability distribution:
    \begin{equation}
        r_{w_0}(\lambda_{w_0})\coloneqq \frac{r(\lambda_{w_0})Q_{A_0}(a_0|x_0,\lambda_{w_0})Q_{B_0}(b_0|y_0,\lambda_{w_0})}{Q_{0}(a_0, b_0|x_0,y_0)}.
    \end{equation}
    
    It follows trivially that $r_{w_0}(\lambda_{w_0})> 0$ and that $\sum_{\lambda_{w_0}}r_{w_0}(\lambda_{w_0})=1$. Thus, we have found for fixed variables $w_0=(a_0,b_0,x_0,y_0)$ a set of exogenous variables $\lambda_{w_0}$, a probability distribution $\{r_{w_0}(\lambda_{w_0})\}_{\lambda_{w_0}}$ and behaviours $\{\QQ_{A_{1|0}}(\lambda_{w_0})\}_{\lambda_{w_0}}$ and $\{\QQ_{B_{1|0}}(\lambda_{w_0})\}_{\lambda_{w_0}}$ such that for all $a_1,b_1,x_1,y_1$:
    \begin{equation}
        Q_{1|0}(a_1,b_1|\bm{x},\bm{y},a_0,b_0) = \sum_{\lambda_{w_0}} r_{w_0}(\lambda_{w_0})Q_{A_{1|0}}(a_1|x_0,x_1,a_0,\lambda_{w_0}) Q_{B_{1|0}}(b_1|y_0,y_1,b_0,\lambda_{w_0})\text{,}
    \end{equation}
    which means that $\QQ_{1|0}$ is local in the $A_1|B_1$ split for the given $a_0,b_0,x_0,y_0$.     
\end{proof}

The following lemma shows that a non-vanishing term always remains in the chain rule.

\begin{lemma}\label{lemma:ConditionalDivergenceBiggerZero}
    Let $\PP_{01}=\{P_{01}(\bm{a},\bm{b}|\bm{x},\bm{y})\}_{\bm{a},\bm{b},\bm{x},\bm{y}}$ be a non-signalling behaviour in the $A_0B_0|A_1B_1$ split such that $\PP_1=\{P_{1}(a_1,b_1|x_1,y_1)\}_{a_1,b_1,x_1,y_1}$ is nonlocal in the $A_1|B_1$ split. Let $\QQ_{01}=\{Q_{01}(\bm{a},\bm{b}|\bm{x},\bm{y})\}_{\bm{a},\bm{b},\bm{x},\bm{y}}$ be a $LR_{ns}$ behaviour such that $\mathrm{supp}(\PP_{01})\subseteq \mathrm{supp}(\QQ_{01})$. Then we have, for every $x_0,y_0$:
    \begin{equation}
        \max_{x_1,y_1} \sum_{a_0,b_0}P_0(a_0,b_0|x_0,y_0)S(\PP_{1|0}(., .|\bm{x},\bm{y}, a_0, b_0)||\QQ_{1|0}(., .|\bm{x},\bm{y}, a_0, b_0)) > 0,
    \end{equation}
    for every $x_0, y_0$.
\end{lemma}
\begin{proof}
    Given $x_0, y_0$, by \textbf{Lemma \ref{lemma:MarginalNL}}, there exists $\underline{a_0},\underline{b_0}$ such that $P_{0}(\underline{a_0}, \underline{b_0}|x_0,y_0) \neq 0$ and the behaviour $\PP_{1|0}=\{P_{1|0}(a_1, b_1|\bm{x},\bm{y}, \underline{a_0}, \underline{b_0})\}_{a_1,b_1,x_1,y_1}$ is nonlocal in the $A_1|B_1$ split. Since $\mathrm{supp}(\PP_{01})\subseteq \mathrm{supp}(\QQ_{01})$, we necessarily have that $Q_{0}(\underline{a_0}, \underline{b_0}|x_0,y_0) \neq 0$. However, then by \textbf{Lemma \ref{lemma:MarginalLR}}, the behaviour $\QQ_{1|0}=\{Q_{1|0}(a_1, b_1|\bm{x},\bm{y}, \underline{a_0}, \underline{b_0})\}_{a_1,b_1,x_1,y_1}$ is local in the $A_1|B_1$ split. Therefore:
    \begin{equation}
    \begin{split}
        & \max_{x_1,y_1} \sum_{a_0,b_0}P_{0}(a_0,b_0|x_0,y_0)S(\PP_{1|0}(., .|\bm{x},\bm{y}, a_0, b_0)||\QQ_{1|0}(., .|\bm{x},\bm{y}, a_0, b_0))\\
        &\ge\max_{x_1,y_1}P_{0}(\underline{a_0},\underline{b_0}|x_0,y_0)S(\PP_{1|0}(., .|\bm{x},\bm{y}, \underline{a_0}, \underline{b_0})||\QQ_{1|0}(., .|\bm{x},\bm{y}, \underline{a_0}, \underline{b_0}))\text{.}
    \end{split}
    \end{equation}
    But, as behaviours, $\{P_{1|0}(a_1, b_1|\bm{x},\bm{y}, \underline{a_0}, \underline{b_0})\}_{a_1,b_1,x_1,y_1}$ and $\{Q_{1|0}(a_1, b_1|\bm{x},\bm{y}, \underline{a_0}, \underline{b_0})\}_{a_1,b_1,x_1,y_1}$ are different, since the first is nonlocal while the second is local in the $A_1|B_1$ split. Therefore, 
    \begin{equation}
        \max_{x_1,y_1}S(\PP_{1|0}(., .|\bm{x},\bm{y}, \underline{a_0}, \underline{b_0})||\QQ_{1|0}(., .|\bm{x},\bm{y}, \underline{a_0}, \underline{b_0}))>0.
    \end{equation}
    Thus,
    \begin{equation}
        \max_{x_1,y_1} \sum_{a_0,b_0}P_{0}(a_0,b_0|x_0,y_0)S(\PP_{1|0}(., .|\bm{x},\bm{y}, a_0, b_0)||\QQ_{1|0}(., .|\bm{x},\bm{y}, a_0, b_0))>0.
    \end{equation}
\end{proof}

The following lemma guarantees the finiteness of the relative entropy of nonlocality, and thus the finiteness of the terms in the chain rule.

\begin{lemma}\label{lemma:finiteness}
    Let $\PP$ be any behaviour from the $(4,m,o)$-scenario. Then the relative entropy of nonlocality of $\PP$ is always finite.
\end{lemma}

\begin{proof}
    By definition, we have:
    \begin{equation}
        E_{\mathrm{LR}}(\PP)=\inf_{\QQ\in\mathrm{LR_{ns}}} S_\mathrm{b}(\PP||\QQ)\text{.}
    \end{equation}
    In general, the relative entropy of behaviours is always non-negative but can be infinite. However, it is easy to see that it must be finite, as one can always choose a special uniform element $\hat{\QQ}$ from $\mathrm{LR_{ns}}$:
    \begin{equation}
       \hat{Q}(\bm{a},\bm{b}|\bm{x},\bm{y})\coloneqq \frac{1}{o^4}\text{,}
    \end{equation}
    where $o$ is the number of outputs of the behaviour. It is easy to check that this is an $\mathrm{LR_{ns}}$ behaviour as it simply factorises into the product of individual boxes for the parties $A_0,A_1,B_0$ and $B_1$ all with uniform outputs regardless of inputs. Furthermore:
    \begin{equation}
        S_b(\PP||\hat{\QQ})=\max_{\bm{x},\bm{y}}\left[\log(o^4)-H(P(.,.||\bm{x},\bm{y}))\right]\leq \log(o^4)\text{,}
    \end{equation}
    where $H$ is the Shannon entropy of the probability distribution $\{P(\bm{a},\bm{b}|\bm{x},\bm{y})\}_{\bm{a},\bm{b}}$, which takes values between 0 and $\log(o^4)$.
    Finally, by definition:
    \begin{equation}
       E_{\mathrm{LR}}(\PP)=\inf_{\QQ\in\mathrm{LR_{ns}}} S_\mathrm{b}(\PP||\QQ) \leq  S_b(\PP||\hat{\QQ})\leq \log(o^4)\text{.}
    \end{equation}
\end{proof}

The following lemma guarantees for all behaviours the existence of a local realistic non-signalling behaviour with which the relative entropy of behaviours reaches its infimum.

\begin{lemma}\label{lemma:MinEntropy}
For any behaviour $\PP$, there is always a behaviour $\bar{\QQ}\in \mathrm{LR_{ns}}$ such that $\mathrm{supp}(\PP)\subseteq \mathrm{supp}(\bar{\QQ})$ and $E_{\mathrm{LR}}(\PP) = S_{\mathrm{b}}(\PP||\bar{\QQ})$. In other words, the infimum in the definition of the relative entropy of nonlocality is always achieved with a behaviour $\bar{\QQ}$ whose support is a superset of that of $\PP$.
\end{lemma}
\begin{proof}
    It can be easily seen that the set of $\mathrm{LR_{ns}}$ behaviours is a polytope \cite{pironio2005lifting}, therefore it is a compact set. On the other hand, fixing a behaviour $\PP$, let us define $f(\QQ) := S_{\mathrm{b}}(\PP||\QQ)$. This is lower semi-continuous on the set of $\mathrm{LR_{ns}}$ behaviours \cite{dupuis1997weak}. Therefore, by the extended Bolzano–Weierstrass theorem, the function $f$ has a minimum in $\mathrm{LR_{ns}}$.

    Furthermore, we have from \textbf{Lemma \ref{lemma:finiteness}} that the entropy of nonlocality of any behaviour is upper-bounded. This necessarily means that any behaviour $\bar{\QQ}$ such that $E_{\mathrm{LR}}(\PP) = S_{\mathrm{b}}(\PP||\bar{\QQ})$ must have a support that is a superset of that of $\PP$ otherwise the entropy of nonlocality would give infinity, which is a contradiction.
\end{proof}

Given these results, let us now move on to the main goal of this section, proving \textbf{Proposition \ref{prop:EntropyDilatation}}.
\propEntropyDilation*
\begin{proof}
Let $\PP'=\PP'_{01}\eqqcolon\{P'_{01}(\bm{a},\bm{b}|\bm{x},\bm{y})\}_{\bm{a},\bm{b},\bm{x},\bm{y}}$ be the broadcast version of the behaviour $\PP\eqqcolon\{P(a,b|x,y)\}_{a,b,x,y}$ which is non-local in the $A|B$ split. Then, by definition,
\begin{equation}
    E_{\mathrm{LR}}(\PP'_{01})=  \inf_{\QQ_{01} \in \mathrm{LR_{ns}}} S_{\mathrm{b}}(\PP'_{01}||\QQ_{01})= S_{\mathrm{b}}(\PP'_{01}||\bar{\QQ}_{01})\text{,}
\end{equation}
where we have used \textbf{Lemma \ref{lemma:MinEntropy}} in the last equality. Note also that $\mathrm{supp}(\PP_{01})\subseteq \mathrm{supp}(\bar{\QQ}_{01})$. Then,
\begin{equation}
    E_{\mathrm{LR}}(\PP'_{01}) =  S_{\mathrm{b}}(\PP'_{01}||\bar{\QQ}_{01})= \max_{\bm{x},\bm{y}} S(\PP'_{01}(.\,,.|\bm{x},\bm{y})||\bar{\QQ}_{01}(.\,,.|\bm{x},\bm{y})).
\end{equation}
Using now the chain rule \eqref{eq:ChainRuleBox} for each $\bm{x},\bm{y}$ pair, we have:
\begin{align}
    &E_{\mathrm{LR}}(\PP'_{01})=\\
    &=  \max_{\bm{x},\bm{y}} \left[S(\PP'_{0}(., .|x_0, y_0)||\bar{\QQ}_{0}(., .|x_0, y_0)) + \sum_{a_0, b_0} P'_0(a_0,b_0|x_0,y_0)S(\PP'_{1|0}(., .|\bm{x},\bm{y}, a_0, b_0)||\bar{\QQ}_{1|0}(., .|\bm{x},\bm{y}, a_0, b_0))\right]\nonumber\\
    &= \max_{x_0,y_0} \max_{x_1,y_1}\left[S(\PP'_{0}(., .|x_0, y_0)||\bar{\QQ}_{0}(., .|x_0, y_0)) + \sum_{a_0, b_0} P'_0(a_0,b_0|x_0,y_0)S(\PP'_{1|0}(., .|\bm{x},\bm{y}, a_0, b_0)||\bar{\QQ}_{1|0}(., .|\bm{x},\bm{y}, a_0, b_0))\right]\nonumber\\
    &= \max_{x_0,y_0}\left[S(\PP'_{0}(., .|x_0, y_0)||\bar{\QQ}_{0}(., .|x_0, y_0)) + \max_{x_1,y_1}\sum_{a_0, b_0} P'_0(a_0,b_0|x_0,y_0)S(\PP'_{1|0}(., .|\bm{x},\bm{y}, a_0, b_0)||\bar{\QQ}_{1|0}(., .|\bm{x},\bm{y}, a_0, b_0))\right]\nonumber.
\end{align}
Applying \textbf{Lemma \ref{lemma:ConditionalDivergenceBiggerZero}} we get,
\begin{equation}
\begin{split}
    E_{\mathrm{LR}}(\PP'_{01}) &>  \max_{x_0,y_0}S(\PP'_{0}(., .|x_0, y_0)||\bar{\QQ}_{0}(., .|x_0, y_0))\\
    &= S_{\mathrm{b}}(\PP'_0||\bar{\QQ}_0).
\end{split}
\end{equation}
It is trivial to see that $\bar{\QQ}_0$ is a local behaviour, therefore:
\begin{equation}
\begin{split}
    E_{\mathrm{LR}}(\PP'_{01}) &> \inf_{\QQ \in \mathrm{LR_{ns}}}S_{\mathrm{b}}(\PP'_0||\QQ)\\
    &= E_{\mathrm{LR}}(\PP),
\end{split}
\end{equation}
where in the last line we have used that if $\PP'_{01}$ is a broadcast version of $\PP$, then $\PP'_0 = \PP$.
\end{proof}

\section{Proof of Proposition \ref{prop:DilationEntropyAssemblages}}\label{sec:appendix-proof-prop:DilatationAssemblages}

In this appendix, we prove the following proposition:
\propDilationEntropyAssemblages*

We continue to use the previously introduced $0$ and $1$ indices to denote either $A_iB_i$ ($i=0,1$) or simply $A_i$ or $B_i$, whichever is appropriate.

Whereas in the previous appendix we used the chain rule to prove the previous inequality, here we shall use an inequality by Piani \cite{piani2009relative}. Let us first introduce the concept of a measurement map.
\begin{defi}
    Given a POVM $\{\Pi_k\}_{k=1}^{n}\subset \mathcal{L}(\mathcal{H})$ the measurement map associated to it is defined as $\mathcal{N}: \mathcal{L}(\mathcal{H}) \to \mathcal{L}(\mathbb{C}^{n})$, such that
    \begin{equation}
        \mathcal{N}(X) \coloneqq \sum_{k=1}^{n} \trace(\Pi_kX) \ketbra{k}{k},
    \end{equation}
    where $X\in\mathcal{L}(\mathcal{H})$ and $\{\ket{k}\}_{k=1}^{n}$ is an orthonormal basis in $\mathbb{C}^{n}$.
\end{defi}
Let $\rho_{W\!Z},\sigma_{W\!Z}\in\mathcal{D}(\mathcal{H}_W\otimes\mathcal{H}_Z)$ be bipartite quantum states and let $\mathcal{N}: \mathcal{L}(\mathcal{H}_W) \to \mathcal{L}(\mathbb{C}^{n})$ be a measurement map associated to a POVM $\{\Pi_k\}_{k=1}^{n}$ on $\mathcal{H}_W$. In Ref.~\cite{piani2009relative}, Piani showed the following inequality:
\begin{equation}\label{eq:Piani's_Result}
    S_{\mathrm{q}}(\rho_{W\!Z}||\sigma_{W\!Z}) \ge S_{\mathrm{q}}(\mathcal{N}(\rho_W)||\mathcal{N}(\sigma_W)) + S_{\mathrm{q}}\left(\rho_Z\middle|\middle| \sum_{k=1}^{n} \alpha_{k} \sigma_{Z}^{k}\right),
\end{equation}
where $\rho_W,\sigma_{W}$, and $\rho_Z$ are the appropriate marginals, $\alpha_{k} \coloneqq \trace(\Pi_k \rho_W)$, and $\sigma_{Z}^{k} \coloneqq \frac{\trace_{W}((\Pi_{k} \otimes \mathbb{1}_Z) \sigma_{W\!Z})}{\trace((\Pi_{k} \otimes \mathbb{1}_Z) \sigma_{W\!Z})}$. Note that if $\alpha_k\neq 0$ and $\trace((\Pi_{k} \otimes \mathbb{1}_Z) \sigma_{W\!Z})=0$ the state $\sigma^k_{Z}$ cannot be defined. However, in this case, Piani's inequality is vacuous, as it can be shown that both the left-hand side and the first term on the right-hand side are equal to infinity. If $\alpha_k=\trace((\Pi_{k} \otimes \mathbb{1}_Z) \sigma_{W\!Z})=0$  then the corresponding $k$ index can be disregarded in Piani's inequality. It is also important to note that $\alpha_k$ is dependent on $\rho_W$ and $\sigma^k_Z$ is dependent on $\sigma_{W\!Z}$.

This inequality will form the basis of our proof in the following way: let us have two non-signalling assemblages $\RR_{01}\coloneqq\{P_{01}(a_0,a_1|x_0,x_1)\rho_{a_0,a_1|x_0,x_1}\}_{a_0,a_1,x_0,x_1}$ and $\ZZ_{01}\coloneqq\{Q_{01}(a_0,a_1|x_0,x_1)\sigma_{a_0,a_1|x_0,x_1}\}_{a_0,a_1,x_0,x_1}$ in the $(2,2,m,o,d)$ broadcasting scenario. Let $\{\pi_{01}(x_0,x_1)\}_{x_0,x_1}$ be a probability distribution for the choices of the parties $A_0$ and $A_1$. In order not to overload the notation, we will introduce the shorthand $W \coloneqq (E_0F_0B_0)$ and $Z\coloneqq (E_1F_1B_1)$. Thus, the CQ-states associated to the assemblages $\RR_{01}$ and $\ZZ_{01}$ are, respectively:
\begin{equation}
    \rho_{W\!Z} \coloneqq \underset{\substack{a_0,a_1\\x_0,x_1}}{\sum} \pi_{01}(x_0,x_1)P_{01}(a_0,a_1|x_0,x_1){\ketbra{x_0,x_1}{x_0,x_1} \otimes \ketbra{a_0,a_1}{a_0,a_1} \otimes \rho_{a_0,a_1|x_0,x_1}},
\end{equation}
and 
\begin{equation}
    \sigma_{W\!Z} \coloneqq \underset{\substack{a_0,a_1\\x_0,x_1}}{\sum} \pi_{01}(x_0,x_1)Q_{01}(a_0,a_1|x_0,x_1){\ketbra{x_0,x_1}{x_0,x_1} \otimes \ketbra{a_0,a_1}{a_0,a_1} \otimes \sigma_{a_0,a_1|x_0,x_1}},
\end{equation}
where we naturally have $\ket{x_0,x_1}=\ket{x_0}\otimes\ket{x_1}$ and similarly for other vectors.

Additionally, we will consider that $\ZZ_{01}$ is an $\mathrm{UR_{ns}}$ assemblage. Thus, there exists a probability distribution $\{r(\lambda)\}_\lambda$ over an exogenous variable $\lambda$, a set of quantum states $\{\sigma_{01}(\lambda)\}_{\lambda}\subset\mathcal{D}(\mathcal{H}_{B_0}\otimes\mathcal{H}_{B_1})$, and a behaviour $\{Q_{01}(a_0,a_1|x_0,x_1,\lambda)\}_{a_0,a_1,x_0,x_1,\lambda}$ such that we have:
\begin{equation}
    \sigma_{W\!Z} = \sum_{\lambda}\underset{\substack{a_0,a_1\\x_0,x_1}}{\sum} \pi_{01}(x_0,x_1)r(\lambda)Q_{01}(a_0,a_1|x_0,x_1,\lambda)\ketbra{x_0,x_1}{x_0,x_1} \otimes \ketbra{a_0,a_1}{a_0,a_1} \otimes \sigma_{01}(\lambda),
\end{equation}
where, for all $\lambda$, the behaviour $\{Q_{01}(a_0,a_1|x_0,x_1,\lambda)\}_{a_0,a_1,x_0,x_1}$ is non-signalling in the $A_0|A_1$ split. 

Now, let $\{\Gamma_i\}_{i=1}^{n} \subset \mathcal{L}(\mathcal{H}_{B_0})$ be an arbitrary Informationally Complete POVM (IC-POVM) on $\mathcal{H}_{B_0}$. Using this, we define the POVM $\{\Pi_{k}\}_k \subset \mathcal{L}(\mathcal{H}_W)=\mathcal{L}(\mathbb{C}^{m} \otimes \mathbb{C}^{o} \otimes \mathcal{H}_{B_0})$ as $\Pi_{k} \coloneqq \Pi_{x_0,a_0,i} = \ketbra{x_0}{x_0} \otimes \ketbra{a_0}{a_0} \otimes \Gamma_i$, where we have introduced the abbreviation $k\coloneqq(x_0,a_0,i)$. Note that $\mathbb{C}^m\cong\mathcal{H}_{E_0}$ and $\mathbb{C}^o\cong\mathcal{H}_{F_0}$. 
Let $\mathcal{N}: 
\mathcal{L}(\mathbb{C}^{m} \otimes \mathbb{C}^{o} \otimes \mathcal{H}_{B_0}) \to \mathcal{L}(\mathbb{C}^{m} \otimes \mathbb{C}^{o} \otimes  \mathbb{C}^{n})$ be the measurement map associated to the POVM $\{\Pi_{k}\}_k$, i.e. 
\begin{equation}
    \mathcal{N}(X) \coloneqq \sum_{x_0,a_0,i} \trace\big[(\ketbra{x_0}{x_0} \otimes \ketbra{a_0}{a_0} \otimes \Gamma_i)X\big] \ketbra{x_0}{x_0} \otimes \ketbra{a_0}{a_0} \otimes \ketbra{i}{i}, 
\end{equation}
for some $X\in\mathcal{L}(\mathbb{C}^{m} \otimes \mathbb{C}^{o} \otimes \mathcal{H}_{B_0})$. Then, by applying eq.~\eqref{eq:Piani's_Result} to the states $\rho_{W\!Z}$ and $\sigma_{W\!Z}$ using the measurement map $\mathcal{N}$, we have:
\begin{equation}\label{eq:piani_for_prop}
    S_{\mathrm{q}}(\rho_{W\!Z}||\sigma_{W\!Z}) \ge S_{\mathrm{q}}(\mathcal{N}(\rho_W)||\mathcal{N}(\sigma_W)) + S_{\mathrm{q}}\left(\rho_Z\middle|\middle| \sum_{k} \alpha_{k} \sigma_{Z}^{k}\right).
\end{equation}
Notice how the left-hand side already contains the building block for the KL-divergence of assemblage $\RR_{01}$ from $\ZZ_{01}$. Thus, Piani's inequality will be used in this manner throughout the appendix. To prove \textbf{Proposition \ref{prop:DilationEntropyAssemblages}}, we shall prove that the first term on the right-hand side of Equation \ref{eq:piani_for_prop} is greater than zero and that the second term is related to the relative entropy of steering of the marginal. In the following, we will often use independent probability distributions $\{\pi_{01}(x_0,x_1)\}_{x_0,x_1}=\{\pi_0(x_0)\pi_{1}(x_1)\}_{x_0,x_1}$ to achieve the desired results. Later on, we shall show that this does not lead to loss of generality.

Before going on to prove the proposition, we prove six auxiliary statements that will help us reach the desired conclusion. \textbf{Lemma \ref{lemma:Reduced_CQ-states}} gives a straightforward connection between the CQ-state of a multipartite assemblage and the CQ-state of its marginal. \textbf{Lemmas \ref{lemma:post_measurement_assemblages_unsteerable}} and \textbf{\ref{lemma:convex_sum_CQ_assemblages}} establish that the convex sum of states in Piani's inequality is unsteerable if the original assemblage was $\mathrm{UR_{ns}}$. \textbf{Lemma \ref{lemma:injectivity_ICPOVMs}} proves that assemblages can remain comparable even after using a measurement map. \textbf{Lemma \ref{lemma:finite_assemblage}} shows that, for our use, the terms in Piani's inequality are all finite, and thus the inequality is not vacuous. Finally, \textbf{Lemma \ref{lemma:analysis}} is a basic result from real analysis on the exchange of infima and suprema that we use during the proof of the main proposition.

The following lemma shows the connection between the CQ-state of a multipartite assemblage and the CQ-state of its marginal. 

\begin{lemma}\label{lemma:Reduced_CQ-states}
    Let $\RR_{01}=\{P_{01}(a_0,a_1|x_0,x_1)\rho_{a_0,a_1|x_0,x_1}\}_{a_0,a_1,x_0,x_1}$ be a non-signalling assemblage in the $0|1$ split, and let $\{\pi_{01}(x_0,x_1)\}_{x_0,x_1}$ be a probability distribution. If $\rho_{W\!Z}$ is the CQ-state associated to the pair $(\RR_{01}, \{\pi_{01}(x_0,x_1)\}_{x_0,x_1})$, then $\rho_W \coloneqq \trace_{Z}(\rho_{W\!Z})$ is the CQ-state associated to the marginal pair $(\RR_{0}, \{\pi_0(x_0)\}_{x_0})$.
\end{lemma}
\begin{proof}
By definition of the CQ-state,
\begin{equation}
    \rho_{W\!Z} \coloneqq \underset{\substack{a_0,a_1\\x_0,x_1}}{\sum} \pi_{01}(x_0,x_1)P_{01}(a_0,a_1|x_0,x_1)\ketbra{x_0,x_1}{x_0,x_1} \otimes \ketbra{a_0,a_1}{a_0,a_1} \otimes \rho_{a_0,a_1|x_0,x_1}.
\end{equation}
Thus,
\begin{align}
    \rho_{W} &= \trace_{Z}(\rho_{W\!Z})\nonumber\\
    &= \trace_{E_1F_1B_1}\left[ \underset{\substack{a_0,a_1\\x_0,x_1}}{\sum} \pi_{01}(x_0,x_1)P_{01}(a_0,a_1|x_0,x_1) \ketbra{x_0,x_1}{x_0,x_1} \otimes \ketbra{a_0,a_1}{a_0,a_1} \otimes \rho_{a_0,a_1|x_0,x_1}\right]\nonumber\\
    &=\underset{\substack{a_0,a_1\\x_0,x_1}}{\sum} \pi_{01}(x_0,x_1)P_{01}(a_0,a_1|x_0,x_1)\ketbra{x_0}{x_0} \otimes \ketbra{a_0}{a_0} \otimes \trace_{B_1}(\rho_{a_0,a_1|x_0,x_1})\\
    &= \sum_{a_0,x_0,x_1} \pi_{01}(x_0,x_1)\ketbra{x_0}{x_0} \otimes \ketbra{a_0}{a_0} \otimes \left(\sum_{a_1}P_{01}(a_0,a_1|x_0,x_1)\trace_{B_1}({\rho_{a_0,a_1|x_0,x_1}})\right)\text{.}\nonumber
\end{align}
As $\RR_{01}$ is a non-signalling assemblage in the $0|1$ split we have that $\sum_{a_1}P_{01}(a_0,a_1|x_0,x_1)\trace_{B_1}(\rho_{a_0,a_1|x_0,x_1}) = P_0(a_0|x_0)\rho_{a_0|x_0}$, which is, by Definition \ref{def:multipart_ns}, an element of the marginal assemblage $\RR_0=\{P_{0}(a_0|x_0)\rho_{a_0|x_0}\}_{a_0,x_0}$. Consequently,
\begin{equation}
\begin{split}
    \rho_{W} &=  \sum_{a_0,x_0} \left(\sum_{x_1}\pi_{01}(x_0,x_1)\right)P_0(a_0|x_0)\ketbra{x_0}{x_0} \otimes \ketbra{a_0}{a_0} \otimes \rho_{a_0|x_0}\\
    &=  \sum_{a_0,x_0} \pi_0(x_0)P_0(a_0|x_0)\ketbra{x_0}{x_0} \otimes \ketbra{a_0}{a_0} \otimes \rho_{a_0|x_0}.
\end{split}
\end{equation}
Therefore, $\rho_{W}$ is the CQ-state associated to the marginal pair $(\RR_{0}, \{\pi_0(x_0)\}_{x_0})$.
\end{proof}

The following lemma establishes that the marginal-like states appearing in Piani's inequality are related to unsteerable assemblages if the original assemblage was $\mathrm{UR_{ns}}$ and if the probability distribution was a product distribution.

\begin{lemma}\label{lemma:post_measurement_assemblages_unsteerable}
    Let $\ZZ_{01}=\{Q_{01}(a_0,a_1|x_0,x_1) \sigma_{a_0,a_1|x_0,x_1}\}_{a_0,a_1,x_0,x_1}$ be an $\mathrm{UR_{ns}}$ assemblage from the $(2,2,m,o,d)$ steering scenario, and let $\{\pi_{0}(x_0)\pi_{1}(x_1)\}_{x_0,x_1}$ be a product probability distribution leading to a CQ-state $\sigma_{W\!Z}$. Let $\{\Pi_k\}_{k}\coloneqq\{\ketbra{x_0}{x_0} \otimes \ketbra{a_0}{a_0} \otimes \Gamma_i\}_{x_0,a_0,i}\subset\mathcal{L}(\mathbb{C}^{m} \otimes \mathbb{C}^{o} \otimes \mathcal{H}_{B_0})$ be a POVM such that $\trace((\Pi_k\otimes\mathbb{1}_Z)\sigma_{W\!Z})\neq 0$ for all $k$. Then, for each $k=(x_0,a_0,i)$ the state $\sigma_{Z}^{k}\coloneqq\frac{\trace_{W}((\Pi_{k} \otimes \mathbb{1}_Z) \sigma_{W\!Z})}{\trace((\Pi_{k} \otimes \mathbb{1}_Z) \sigma_{W\!Z})}$ is a CQ-state associated to an unsteerable assemblage with the marginal probability distribution $\{\pi_1(x_1)\}_{x_1}$.
\end{lemma}
\begin{proof}
    Given a fixed $\underline{k} = (\underline{x_0},\underline{a_0},\underline{i})$, by definition we have
    \begin{equation}
        \sigma_{Z}^{\underline{k}} \coloneqq \frac{\trace_{W}((\Pi_{\underline{k}} \otimes \mathbb{1}_Z) \sigma_{W\!Z})}{\trace((\Pi_{\underline{k}} \otimes \mathbb{1}_Z) \sigma_{W\!Z})}.
    \end{equation}
    Using the fact that $\ZZ_{01}$ is an $\mathrm{UR_{ns}}$ assemblage, we have an exogenous variable $\lambda$, a probability distribution $\{r(\lambda)\}_{\lambda}$, a behaviour $\{Q_{01}(a_0,a_1|x_0,x_1,\lambda)\}_{a_0,a_1,x_0,x_1,\lambda}$ that is non-signalling in the $A_0|A_1$ split for each $\lambda$, and quantum states $\{\sigma_{01}(\lambda)\}_\lambda\subset\mathcal{D}(\mathcal{H}_{B_0}\otimes\mathcal{H}_{B_1})$, such that:
    \begin{align}
        &(\Pi_{\underline{k}} \otimes \mathbb{1}_Z) \sigma_{W\!Z} = (\Pi_{\underline{x_0},\underline{a_0},\underline{i}} \otimes \mathbb{1}_Z) \sigma_{W\!Z}\\
        &= \underset{\substack{a_0,a_1\\x_0,x_1}}{\sum} \!\pi_0(x_0)\pi_1(x_1)Q_{01}(a_0,a_1|x_0,x_1)(\ketbra{\underline{x_0}}{\underline{x_0}}\!\ketbra{x_0}{x_0}) \otimes \ketbra{x_1}{x_1} \otimes (\ketbra{\underline{a_0}}{\underline{a_0}}\!\ketbra{a_0}{a_0}) \otimes \ketbra{a_1}{a_1} \otimes (\Gamma_{\underline{i}}\otimes \mathbb{1}_{B_{1}})\sigma_{a_0,a_1|x_0,x_1}\nonumber\\
        &= \sum_{\lambda}\sum_{a_1,x_1} \pi_0(\underline{x_0})\pi_1(x_1)r(\lambda) Q_{01}(\underline{a_0},a_1|\underline{x_0},x_1,\lambda) \ketbra{\underline{x_0}}{\underline{x_0}} \otimes \ketbra{x_1}{x_1} \otimes \ketbra{\underline{a_0}}{\underline{a_0}} \otimes \ketbra{a_1}{a_1} \otimes (\Gamma_{\underline{i}}\otimes\mathbb{1}_{B_1}) \sigma_{01}(\lambda)\text{.}\nonumber
    \end{align}
    Thus,
    \begin{align}
        \trace_{W}\left[(\Pi_{\underline{k}} \otimes \mathbb{1}_Z) \sigma_{W\!Z}\right] &= \trace_{E_0F_0B_0}((\Pi_{\underline{x_0},\underline{a_0},\underline{i}} \otimes \mathbb{1}_Z) \sigma_{W\!Z})\\
        &= \sum_{\lambda}\sum_{a_1,x_1} \pi_0(\underline{x_0})\pi_1(x_1)r(\lambda) Q_{01}(\underline{a_0},a_1|\underline{x_0},x_1,\lambda) \ketbra{x_1}{x_1} \otimes \ketbra{a_1}{a_1} \otimes \trace_{B_0}((\Gamma_{\underline{i}}\otimes\mathbb{1}_{B_1}) \sigma_{01}(\lambda))\nonumber\\
        &= \sum_{\lambda}\sum_{a_1,x_1} \pi_0(\underline{x_0})\pi_1(x_1)r(\lambda) Q_{01}(\underline{a_0},a_1|\underline{x_0},x_1,\lambda) \ketbra{x_1}{x_1} \otimes \ketbra{a_1}{a_1} \otimes \tilde{\sigma}_{1}(\lambda,\underline{i}),\nonumber
    \end{align}
    where we have introduced the unnormalised quantum states $\{\tilde{\sigma}_{1}(\lambda,\underline{i}) \coloneqq \trace_{B_0}((\Gamma_{\underline{i}}\otimes\mathbb{1}_{B_1})\sigma_{01}(\lambda))\}_{\lambda}$. Note that if $\tilde{\sigma}_{1}(\lambda,\underline{i})$ were zero, then the corresponding $\lambda$ index can simply be skipped. Note also that there must be at least one $\lambda$ index such that $\tilde{\sigma}_{1}(\lambda,\underline{i})$ is nonzero given that $\trace((\Pi_{\underline{k}}\otimes\mathbb{1}_Z)\sigma_{W\!Z})\neq 0$.
    
    Given that for each $\lambda$ the behaviours $\{Q_{01}(a_0,a_1|x_0,x_1,\lambda)\}_{a_0,a_1,x_0,x_1}$ are non-signalling in the $A_0|A_1$ split, we have for all inputs and outputs where the marginal element $Q_{0}(a_0|x_0,x_1,\lambda)=Q_{0}(a_0|x_0,\lambda)\neq 0$:
    \begin{equation}
        Q_{01}(a_0,a_1|x_0,x_1,\lambda) = Q_{1|0}(a_1|x_0,x_1,a_0,\lambda)Q_{0}(a_0|x_0,x_1,\lambda) = Q_{1|0}(a_1|x_0,x_1,a_0,\lambda)Q_{0}(a_0|x_0,\lambda),
    \end{equation}
    where we have implicitly defined $Q_{1|0}(a_1|x_0,x_1,a_0,\lambda)$. Note that whenever $Q_{0}(a_0|x_0,\lambda)=0$ for some given inputs and outputs, it also implies $Q_{01}(a_0,a_1|x_0,x_1,\lambda)=0$ and, therefore, the corresponding $\lambda$ index can simply be left out of the summation. Note again that because of $\trace((\Pi_k\otimes\mathbb{1}_Z)\sigma_{W\!Z})\neq 0$ there must be at least one $\lambda$ index such that neither $Q_{01}(a_0,a_1|x_0,x_1,\lambda)$ nor $\tilde{\sigma}_{1}(\lambda,\underline{i})$ is zero. Let us denote these indices by $\lambda_{\underline{k}}$. Thus, for all remaining $\lambda_{\underline{k}}$ indices, we have
    \begin{equation}
        \trace_{W}\!\!\left[(\Pi_{\underline{k}} \!\otimes \! \mathbb{1}_Z) \sigma_{W\!Z}\right]\!=\!  \sum_{\lambda_{\underline{k}}}\!\sum_{a_1,x_1} \! \!\pi_0(\underline{x_0})\pi_1(x_1)r(\lambda_{\underline{k}})  Q_{1|0}(a_1|\underline{x_0},\!x_1,\!\underline{a_0},\!\lambda_{\underline{k}})Q_{0}(\underline{a_0}|\underline{x_0},\lambda_{\underline{k}}) \ketbra{x_1}{x_1}\! \otimes \! \ketbra{a_1}{a_1} \!\otimes \tilde{\sigma}_{1}(\lambda_{\underline{k}},\underline{i}).
    \end{equation}
    Furthermore, 
    \begin{equation}
    \begin{split}
        \trace\left[(\Pi_{\underline{k}} \otimes \mathbb{1}_Z) \sigma_{W\!Z}\right] &=  \sum_{\lambda_{\underline{k}}}\sum_{a_1,x_1} \pi_0(\underline{x_0})\pi_1(x_1)r(\lambda_{\underline{k}}) Q_{1|0}(a_1|\underline{x_0},x_1,\underline{a_0},\lambda_{\underline{k}})Q_{0}(\underline{a_0}|\underline{x_0},\lambda_{\underline{k}}) \trace(\tilde{\sigma}_{1}(\lambda_{\underline{k}},\underline{i}))\\
        &= \sum_{\lambda_{\underline{k}}}\sum_{x_1} \pi_0(\underline{x_0})\pi_1(x_1)r(\lambda_{\underline{k}}) \left(\sum_{a_1}Q_{1|0}(a_1|\underline{x_0},x_1,\underline{a_0},\lambda_{\underline{k}})\right)Q_{0}(\underline{a_0}|\underline{x_0},\lambda_{\underline{k}}) \trace(\tilde{\sigma}_{1}(\lambda_{\underline{k}},\underline{i}))\\
        &= \sum_{\lambda_{\underline{k}}} \pi_0(\underline{x_0})\left(\sum_{x_1}\pi_1(x_1)\right)r(\lambda_{\underline{k}}) Q_{0}(\underline{a_0}|\underline{x_0},\lambda_{\underline{k}}) \trace(\tilde{\sigma}_{1}(\lambda_{\underline{k}},\underline{i}))\\
        &= \sum_{\lambda_{\underline{k}}} \pi_0(\underline{x_0})r(\lambda_{\underline{k}})  Q_0(\underline{a_0}|\underline{x_0},\lambda_{\underline{k}}) \trace(\tilde{\sigma}_{1}(\lambda_{\underline{k}},\underline{i})).
    \end{split}
    \end{equation}
    Thus, 
    \begin{align}
        \sigma_{Z}^{\underline{k}} &= \frac{\trace_{W}((\Pi_{\underline{k}} \otimes \mathbb{1}_Z) \sigma_{W\!Z})}{\trace((\Pi_{\underline{k}} \otimes \mathbb{1}_Z) \sigma_{W\!Z})}= \frac{\trace_{W}((\Pi_{\underline{x_0},\underline{a_0},\underline{i}} \otimes \mathbb{1}_Z) \sigma_{W\!Z})}{\trace((\Pi_{\underline{x_0},\underline{a_0},\underline{i}} \otimes \mathbb{1}_Z) \sigma_{W\!Z})}\\ \nonumber
        &=\frac{\sum_{\lambda_{\underline{k}}}\sum_{a_1,x_1} \pi_0(\underline{x_0})\pi_1(x_1)r(\lambda_{\underline{k}}) Q_{1|0}(a_1|\underline{x_0},x_1,\underline{a_0},\lambda_{\underline{k}})Q_{0}(\underline{a_0}|\underline{x_0},\lambda_{\underline{k}}) \ketbra{x_1}{x_1} \otimes \ketbra{a_1}{a_1} \otimes \tilde{\sigma}_{1}(\lambda_{\underline{k}},\underline{i})}{\sum_{\lambda_{\underline{k}}'} \pi_0(\underline{x_0})r(\lambda_{\underline{k}}')Q_0(\underline{a_0}|\underline{x_0},\lambda_{\underline{k}}') \trace(\tilde{\sigma}_{1}(\lambda_{\underline{k}}',\underline{i}))}\\ \nonumber
        &= \sum_{\lambda_{\underline{k}}}\sum_{a_1,x_1} \frac{r(\lambda_{\underline{k}})Q_{0}(\underline{a_0}|\underline{x_0},\lambda_{\underline{k}}) \trace(\tilde{\sigma}_{1}(\lambda_{\underline{k}},\underline{i}))}{\sum_{\lambda_{\underline{k}}'}r(\lambda_{\underline{k}}')Q_0(\underline{a_0}|\underline{x_0},\lambda_{\underline{k}}') \trace(\tilde{\sigma}_{1}(\lambda_{\underline{k}}',\underline{i}))} \pi_1(x_1)Q_{1|0}(a_1|\underline{x_0},x_1,\underline{a_0},\lambda_{\underline{k}})\ketbra{x_1}{x_1} \otimes \ketbra{a_1}{a_1} \otimes \frac{\tilde{\sigma}_{1}(\lambda_{\underline{k}},\underline{i})}{\trace(\tilde{\sigma}_{1}(\lambda_{\underline{k}},\underline{i}))},
    \end{align}
    where we have simultaneously multiplied and divided by $ \trace(\tilde{\sigma}_{1}(\lambda_{\underline{k}},\underline{i}))\neq0$. Let
    \begin{equation}
        r_{\underline{k}}(\lambda_{\underline{k}})  \coloneqq \frac{r(\lambda_{\underline{k}})Q_{0}(\underline{a_0}|\underline{x_0},\lambda_{\underline{k}}) \trace(\tilde{\sigma}_{1}(\lambda_{\underline{k}},\underline{i}))}{\sum_{\lambda_{\underline{k}}'}r(\lambda_{\underline{k}}')Q_{0}(\underline{a_0}|\underline{x_0},\lambda_{\underline{k}}') \trace(\tilde{\sigma}_{1}(\lambda_{\underline{k}}',\underline{i}))}.
    \end{equation}
    It follows that $r_{\underline{k}}(\lambda_{\underline{k}}) \ge 0$ and that $\sum_{\lambda_{\underline{k}}}r_{\underline{k}}(\lambda_{\underline{k}}) = 1$. Let us also define $Q_{1}(a_1|x_1,\lambda_{\underline{k}},\underline{k}) \coloneqq  Q_{1|0}(a_1|\underline{x_0},x_1,\underline{a_0},\lambda_{\underline{k}})$ and the normalised quantum state
    \begin{equation}
        \sigma_{1}(\lambda_{\underline{k}},\underline{k}) \coloneqq  \frac{\tilde{\sigma}_{1}(\lambda_{\underline{k}},\underline{i})}{\trace(\tilde{\sigma}_{1}(\lambda_{\underline{k}},\underline{i}))}.
    \end{equation}
    Then,
    \begin{equation}
        \sigma_{Z}^{\underline{k}} = \sum_{\lambda_{\underline{k}}}\sum_{a_1,x_1}\pi_1(x_1)r_{\underline{k}}(\lambda_{\underline{k}})Q_{1}(a_1|x_1,\lambda_{\underline{k}},\underline{k}) \ketbra{x_1}{x_1} \otimes \ketbra{a_1}{a_1} \otimes \sigma_{1}(\lambda_{\underline{k}},\underline{k}).
    \end{equation}
    Thus, the $\sigma_{Z}^{k}$ states are CQ-states associated to unsteerable assemblages (in the $A_1|B_1$ split) by the marginal probability distribution $\{\pi_1(x_1)\}_{x_1}$.
\end{proof}

The following lemma proves that the convex combination of CQ-states associated to unsteerable assemblages through the same probability distribution is itself associated to an unsteerable assemblage through the same probability distribution.

\begin{lemma}\label{lemma:convex_sum_CQ_assemblages}
    Let $\{\sigma^{k}\}_{k}$ be a set of CQ-states associated to unsteerable (in the $A|B$ split) assemblages $\{Q^k(a|x)\sigma^k_{a|x}\}_{a,x}$ by the same probability distribution $\{\pi(x)\}_{x}$, and let $\{\alpha_{k}\}_k \subset \mathbb{R}^{+}$ be such that $\sum_k \alpha_k = 1$. Then $\sum_k \alpha_k \sigma^{k}$ is also a CQ-state associated to an unsteerable assemblage by the probability distribution $\{\pi(x)\}_{x}$.
\end{lemma}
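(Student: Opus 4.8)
The plan is to reduce the claim to two elementary observations: (i) with the input distribution $\pi(x_1)$ held fixed, the assignment sending an assemblage to its associated CQ-state is \emph{linear} in the unnormalised assemblage elements, and (ii) the set of unsteerable assemblages is closed under convex combinations. Since by hypothesis every $\sigma_W^k$ is built from the \emph{same} distribution $\pi(x_1)$, these two facts combine to give the result, the whole argument amounting to a merger of hidden variables.

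First I would make each hypothesis explicit. By assumption each $\sigma_W^k$ is the CQ-state of some unsteerable assemblage $\{\tau^k_{a_1|x_1}\}$ with the common distribution $\pi(x_1)$, so that
\begin{equation}
\sigma_W^k = \sum_{a_1,x_1} \pi(x_1)\,\ketbra{x_1}{x_1}\otimes\ketbra{a_1}{a_1}\otimes \tau^k_{a_1|x_1}.
\end{equation}
Because each of these expressions carries the identical prefactor $\pi(x_1)\,\ketbra{x_1}{x_1}\otimes\ketbra{a_1}{a_1}$, I can push the convex sum onto the assemblage elements,
\begin{equation}
\sum_k \alpha_k\,\sigma_W^k = \sum_{a_1,x_1}\pi(x_1)\,\ketbra{x_1}{x_1}\otimes\ketbra{a_1}{a_1}\otimes \tilde{\tau}_{a_1|x_1},
\end{equation}
where $\tilde{\tau}_{a_1|x_1} \coloneqq \sum_k \alpha_k\,\tau^k_{a_1|x_1}$. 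This already exhibits $\sum_k \alpha_k \sigma_W^k$ as the CQ-state associated to the assemblage $\{\tilde{\tau}_{a_1|x_1}\}$ and the distribution $\pi(x_1)$, so it remains only to show that $\{\tilde{\tau}_{a_1|x_1}\}$ is unsteerable.

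For the crux I would insert the LHS decomposition of each summand. Unsteerability of $\{\tau^k_{a_1|x_1}\}$ supplies probability distributions $r_k(\lambda)$, $p^k_\lambda(a_1|x_1)$ and states $\sigma^k_\lambda$ with $\tau^k_{a_1|x_1} = \sum_\lambda r_k(\lambda)\,p^k_\lambda(a_1|x_1)\,\sigma^k_\lambda$. Substituting and merging the label $k$ with the hidden variable $\lambda$ into a single variable $\mu=(k,\lambda)$ yields
\begin{equation}
\tilde{\tau}_{a_1|x_1} = \sum_{k,\lambda}\alpha_k\, r_k(\lambda)\,p^k_\lambda(a_1|x_1)\,\sigma^k_\lambda = \sum_\mu \tilde{r}(\mu)\,\tilde{p}_\mu(a_1|x_1)\,\tilde{\sigma}_\mu,
\end{equation}
with $\tilde{r}(\mu)\coloneqq \alpha_k r_k(\lambda)$, $\tilde{p}_\mu(a_1|x_1)\coloneqq p^k_\lambda(a_1|x_1)$ and $\tilde{\sigma}_\mu\coloneqq \sigma^k_\lambda$. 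This is precisely an LHS model in the sense of eq.~\eqref{eq:LHS-model}, so $\{\tilde{\tau}_{a_1|x_1}\}$ is unsteerable and the lemma follows.

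The only genuine verification is that $\tilde{r}$ is a legitimate probability distribution: nonnegativity is immediate from $\alpha_k\ge 0$ and $r_k(\lambda)\ge 0$, while normalisation follows from $\sum_\mu \tilde{r}(\mu) = \sum_k \alpha_k \sum_\lambda r_k(\lambda) = \sum_k \alpha_k = 1$. I do not expect any real obstacle here; the single load-bearing hypothesis is that all the $\sigma_W^k$ share the same input distribution $\pi(x_1)$, as it is exactly this that lets the common prefactor factor out and keeps the convex combination in CQ form with respect to the very same $\pi(x_1)$.
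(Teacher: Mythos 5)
Your proof is correct and follows essentially the same route as the paper's: both arguments absorb the index $k$ into an enlarged hidden variable ($\mu=(k,\lambda)$ in your notation, $\lambda'=(\lambda,k)$ in the paper's) with weights $\alpha_k r_k(\lambda)$, and verify that these form a probability distribution. Your presentation is slightly more modular in separating the linearity of the assemblage-to-CQ-state map from the convexity of the unsteerable set, but the substance is identical.
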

\begin{proof}
    According to the hypotheses, for each $k$ there exists an exogenous variable $\lambda_k$, a probability distribution $\{r_k(\lambda_k)\}_{\lambda_k}$, a behaviour $\{Q^{k}(a|x,\lambda_k)\}_{a,x,\lambda_k}$, and quantum states $\{\sigma_B^{k}(\lambda_k)\}_{\lambda_k}\subset\mathcal{D}(\mathcal{H}_B)$ such that:
    \begin{equation}
        \sigma^{k} = \sum_{\lambda_k}\sum_{a,x}\pi(x)r_{k}(\lambda_k)Q^{k}(a|x,\lambda_k) \ketbra{x}{x} \otimes \ketbra{a}{a} \otimes \sigma_{B}^{k}(\lambda_k).
    \end{equation}
    Then,
    \begin{equation}
        \sum_k \alpha_k \sigma^{k} = \sum_{k}\sum_{\lambda_k}\sum_{a,x}\pi(x)\alpha_{k} r_{k}(\lambda_k)Q^{k}(a|x,\lambda_k) \ketbra{x}{x} \otimes \ketbra{a}{a} \otimes \sigma_{B}^{k}(\lambda_k).
    \end{equation}
    
    Let $\lambda' \coloneqq (k,\lambda_k)$, $r(\lambda') = r(k,\lambda_k) \coloneqq \alpha_k r_k(\lambda_k)$, $Q(a|x,\lambda') \coloneqq Q^{k}(a|x,\lambda_k)$, and $\sigma_{B}(\lambda') \coloneqq \sigma_{B}^{k}(\lambda_k)$. We can see that $r(\lambda')\ge 0$ and that $\sum_{\lambda'}r(\lambda') = \sum_k \alpha_k (\sum_{\lambda_k} r_k(\lambda_k)) = \sum_k \alpha_k = 1$. 
    Thus,
    \begin{equation}
        \sum_k \alpha_k \sigma^{k} = \sum_{\lambda'}\sum_{a,x}\pi(x) r(\lambda')Q(a|x,\lambda') \ketbra{x}{x} \otimes \ketbra{a}{a} \otimes \sigma_{B}(\lambda').
    \end{equation}
    Therefore, $\sum_k \alpha_k \sigma^{k}$ is a CQ-state associated to an unsteerable assemblage (in the $A|B$ split) by the probability distribution $\{\pi(x)\}_{x}$.
\end{proof}

The following lemma shows that assemblages remain comparable after using an informationally complete measurement map.

\begin{lemma}\label{lemma:injectivity_ICPOVMs}
Let $\RR=\{P(a|x)\rho_{a|x}\}_{a,x}$ and $\ZZ=\{Q(a|x)\sigma_{a|x}\}_{a,x}$ be assemblages from the $(1,1,m,o,d)$ steering scenario and let $\{\pi(x)\}_x$ be a probability distribution such that $\pi(x)>0$ for all $x$. Let $\{\Gamma_i\}_i$ be an IC-POVM on $\mathcal{H}_B$ and let $\{\Pi_{x,a,i}\}_{x,a,i} \coloneqq \{\ketbra{x}{x} \otimes \ketbra{a}{a} \otimes \Gamma_i\}_{x,a,i}\subset\mathcal{L}(\mathbb{C}^m\otimes\mathbb{C}^o\otimes\mathcal{H}_B)$ be a POVM with $\{\ket{x}\}_x$, $\{\ket{a}\}_a$ being orthonormal bases. Then we have for the associated measurement function $\mathcal{N}$:
\begin{equation}
\mathcal{N}(\rho_{EFB}) = \mathcal{N}(\sigma_{EFB}) \quad \implies \quad \RR= \ZZ,
\end{equation} 
where $\rho_{EFB}$ and $\sigma_{EFB}$ are the associated CQ-states.
\end{lemma}
\begin{proof}
We know that
    \begin{equation}
        \mathcal{N}(\rho_{EFB}) = \sum_{x,a,i} \trace(\Pi_{x,a,i}\rho_{EFB}) \ketbra{x}{x} \otimes \ketbra{a}{a} \otimes \ketbra{i}{i}.
    \end{equation}
    Expanding the trace:
    \begin{equation}
    \begin{split}
        \trace(\Pi_{x,a,i}\rho_{EFB}) &= \sum_{x',a'} \pi(x') P(a'|x') \trace\left( \ketbra{x}{x}\ketbra{x'}{x'} \otimes \ketbra{a}{a}\ketbra{a'}{a'} \otimes \Gamma_i \rho_{a'|x'}\right)\\
        &= \pi(x)P(a|x)\trace(\Gamma_i\rho_{a|x}).
    \end{split}
    \end{equation}
    Thus, 
    \begin{equation}
        \mathcal{N}(\rho_{EFB}) = \sum_{x,a,i} \pi(x)P(a|x)\trace(\Gamma_i\rho_{a|x}) \ketbra{x}{x} \otimes \ketbra{a}{a} \otimes \ketbra{i}{i}.
    \end{equation}
    Analogously,
    \begin{equation}
        \mathcal{N}(\sigma_{EFB}) = \sum_{x,a,i} \pi(x)Q(a|x)\trace(\Gamma_i\sigma_{a|x}) \ketbra{x}{x} \otimes \ketbra{a}{a} \otimes \ketbra{i}{i}.
    \end{equation}
    As $\{\ket{x}\}_x, \{\ket{a}\}_a$ and $\{\ket{i}\}_i$ are orthonormal bases, $\mathcal{N}(\rho_{EFB}) = \mathcal{N}(\sigma_{EFB})$ implies that:
    \begin{equation}\label{eq:equality_traces}
        \pi(x)P(a|x)\trace(\Gamma_i\rho_{a|x}) = \pi(x)Q(a|x)\trace(\Gamma_i\sigma_{a|x}), \qquad \forall x,a,i.
    \end{equation}
    Dividing both sides by $\pi(x)\neq 0$ and summing over $i$:
    \begin{align}\label{eq:equality_p_and_q}
  &P(a|x)\,\tr\Bigl(\sum_i \Gamma_i\rho_{a|x}\Bigr)
    = Q(a|x)\,\tr\Bigl(\sum_i \Gamma_i\sigma_{a|x}\Bigr)\text{,}
    &&\forall x,a, \nonumber\\
  &P(a|x)\,\tr\bigl(\rho_{a|x}\bigr)
    = Q(a|x)\,\tr\bigl(\sigma_{a|x}\bigr)\text{,}
    &&\forall x,a, \\
  &P(a|x)
    = Q(a|x)\text{,}
    &&\forall x,a.\nonumber
\end{align}
    Thus, disregarding the trivial case of $P(a|x)=Q(a|x)=0$ in which the states $\rho_{a|x}$ and $\sigma_{a|x}$ do not exist, we have by eqs.~\eqref{eq:equality_traces} and \eqref{eq:equality_p_and_q} that $\trace(\Gamma_i\rho_{a|x}) =\trace(\Gamma_i\sigma_{a|x})$ for all $x,a,i$. As $\{\Gamma_i\}_i$ is an IC-POVM, this implies that $\rho_{a|x} = \sigma_{a|x}$ for all $x,a$. Thus,
    \begin{equation}
        P(a|x)\rho_{a|x} = Q(a|x)\sigma_{a|x}, \qquad \forall x,a.
    \end{equation}
    Therefore, $\RR= \ZZ$.    
\end{proof}

The following lemma guarantees the finiteness of the relative entropy of steering, and thus, for our purposes, the finiteness of the terms in Piani's inequality.

\begin{lemma}\label{lemma:finite_assemblage}
    Let $\RR_{01}=\{P_{01}(a_0,a_1|x_0,x_1)\rho_{a_0,a_1|x_0,x_1}\}_{a_0,a_1,x_0,x_1}$ be any assemblage from the $(2,2,m,o,d)$ steering scenario. Then the relative entropy of steering of $\RR_{01}$ is finite.
\end{lemma}

\begin{proof}
    By definition, we have:
    \begin{equation}
        E_{\mathrm{UR}}\left(\RR_{01}\right)\coloneqq \inf_{\ZZ_{01} \in \mathrm{UR_{ns}}} S_{\mathrm{a}}(\RR_{01}||\ZZ_{01})=\inf_{\ZZ_{01} \in \mathrm{UR_{ns}}}\sup_{\pi_{01}} S_{\mathrm{q}}(\rho_{W\!Z}||\sigma_{W\!Z})\text{,}
    \end{equation}
    where $\pi_{01}$ is any joint probability distribution on the choices of the two Alices and $\rho_{W\!Z},\sigma_{W\!Z}$ are the related CQ-states.

    In general, the relative entropy of assemblages is always non-negative but can be infinite. However, it is easy to see that it must be finite, as one can always choose a special uniform element $\hat{\ZZ}_{01}=\{\hat{Q}_{01}(a_0,a_1|x_0,x_1)\hat{\sigma}_{a_0,a_1|x_0,x_1}\}_{a_0,a_1,x_0,x_1}$ from $\mathrm{UR_{ns}}$ such that for all inputs and outputs:
    \begin{equation}
        \hat{Q}_{01}(a_0,a_1|x_0,x_1){\hat{\sigma}_{a_0,a_1|x_0,x_1}}\coloneqq\frac{1}{o^2}\frac{\mathbb{1}_{B_0}\otimes \mathbb{1}_{B_1}}{d^2}\text{,}
    \end{equation}
    where $o$ is the number of outputs of any Alice and $(\mathbb{1}_{B_0}\otimes \mathbb{1}_{B_1})/d^2$ is simply the maximally mixed state on the Hilbert spaces of the Bobs. It is easy to check that this is an $\mathrm{UR_{ns}}$ assemblage as it can be described by a singleton exogenous variable $\lambda$ and the behaviour itself factorises into the product of individual boxes for the parties $A_0$ and $A_1$, all with uniform outputs regardless of inputs. This leads to:
    \begin{align}
    \rho_{W\!Z} &\coloneqq \underset{\substack{a_0,a_1\\x_0,x_1}}{\sum} \pi_{01}(x_0,x_1)P_{01}(a_0,a_1|x_0,x_1) \ketbra{x_0,x_1}{x_0,x_1} \otimes \ketbra{a_0,a_1}{a_0,a_1} \otimes \rho_{a_0,a_1|x_0,x_1},\\
    \hat{\sigma}_{W\!Z} &\coloneqq \underset{\substack{x_0,x_1}}{\sum} \pi_{01}(x_0,x_1)\frac{1}{o^2}\ketbra{x_0,x_1}{x_0,x_1} \otimes \mathbb{1}_{F_0}\otimes\mathbb{1}_{F_1} \otimes \frac{\mathbb{1}_{B_0}\otimes\mathbb{1}_{B_1}}{d^2},
    \end{align}
    where, in the second equation, we have already gone through the summation for the $a_0,a_1$ indices which are related to the Hilbert spaces $\mathcal{H}_{F_0}$ and $\mathcal{H}_{F_1}$.
    
    We can show the following for any $\rho_{W\!Z},\sigma_{W\!Z}$ comparable CQ-states:
    \begin{align}
            &S_q(\rho_{W\!Z}||\sigma_{W\!Z})=\sum_{x_0,x_1}\pi_{01}(x_0,x_1)S_{\mathrm{q}}(\rho_{W'\!Z'}(x_0,x_1)||\sigma_{W'\!Z'}(x_0,x_1))=\\
            &=\sum_{x_0,x_1}\pi_{01}(x_0,x_1)\left(S(P_{01}(.,.|x_0,x_1)||Q_{01}(.,.|x_0,x_1))+\sum_{a_0,a_1}P_{01}(a_0,a_1|x_0,x_1)S_{\mathrm{q}}(\rho_{a_0,a_1|x_0,x_1}||\sigma_{a_0,a_1|x_0,x_1})\right)\text{,}\nonumber
    \end{align}
    where $\rho_{W'\!Z'}(x_0,x_1)$ and $\sigma_{W'\!Z'}(x_0,x_1)$ are the $x_0,x_1$ input dependent CQ-states just like in Equation \ref{eq:inputdependentcqstates}, which warrants the change from the $W=(E_0,F_0,B_0)$ index to the new $W'\coloneqq(F_0,B_0)$ index, and similarly for $Z$ and $Z'$.

    In our case, with $\hat{\ZZ}_{01}$ we have the following results for the terms on the right-hand side:
    \begin{align}
        S(P_{01}(.,.|x_0,x_1)||\hat{Q}_{01}(.,.|x_0,x_1))&=\log(o^2)-H(P(.,.|x_0,x_1))\leq \log(o^2)\text{,}\\
        S_{\mathrm{q}}(\rho_{a_0,a_1|x_0,x_1}||\hat{\sigma}_{a_0,a_1|x_0,x_1})&=\log(d^2)-H_{q}(\rho_{a_0,a_1|x_0,x_1})\leq\log(d^2)\text{,}
    \end{align}
    where $H$ and $H_q$ are the Shannon entropy and the von Neumann entropy, respectively. We have also noted that for our scenario, the Shannon entropy takes values between $0$ and $\log(o^2)$, while the von Neumann entropy takes values between $0$ and $\log(d^2)$. This leads to:
    \begin{equation}
        S_q(\rho_{W\!Z}||\hat{\sigma}_{W\!Z})\leq \sum_{x_0,x_1}\pi_{01}(x_0,x_1)\left(\log(o^2)+\sum_{a_0,a_1}P_{01}(a_0,a_1|x_0,x_1)\log(d^2)\right)\leq \log(o^2d^2)\text{.}
    \end{equation}
    The above result is true regardless of the probability distribution $\pi_{01}$ and we have shown that $\hat{\ZZ}_{01}\in\mathrm{UR_{ns}}$. Thus, we have:
    \begin{equation}
        E_{\mathrm{UR}}\left(\RR_{01}\right)\coloneqq \inf_{\ZZ_{01} \in \mathrm{UR_{ns}}} S_{\mathrm{a}}(\RR_{01}||\ZZ_{01})\leq \log(o^2d^2)\text{,}
    \end{equation}
    proving that it must be finite.
\end{proof}

The following lemma is an elementary result from real analysis about the exchange of the infimum and supremum. See, e.g.~Lemma 36.1.~in \cite{rockafellar1973convex} or Corollary 3.8.4.~in \cite{riehl2016category}. We quote it here directly:

\begin{lemma}\label{lemma:analysis}
    Let $X$ and $Y$ be nonempty sets, and let $f:X\times Y\to \mathbb{R}$ be any function. Then we have that:
        \begin{equation}
         \inf_{y\in Y}\left(\sup_{x\in X} f(x,y)\right)\geq \sup_{x\in X}\left(\inf_{y\in Y} f(x,y)\right)\text{,}
        \end{equation}
    whenever these infima and suprema exist.
\end{lemma}

\begin{proof}
    We omit the proof and refer to the above-cited sources. We note that in our use case, we have that the range of the quantum KL-divergence is the extended, non-negative real line segment $[0,\infty]$. Thus, we have that every infimum and supremum exists with the understanding that they might be defined as $+\infty$.
\end{proof}

We are now able to show the main result of this section:
\propDilationEntropyAssemblages*
\begin{proof}
Let $\RR'=\RR'_{01}\eqqcolon\{P'_{01}(a_0,a_1|x_0,x_1)\rho'_{a_0,a_1|x_0,x_1}\}_{a_0,a_1,x_0,x_1}$ be the broadcast version of the assemblage $\RR\eqqcolon\{P(a|x)\rho_{a|x}\}_{a,x}$ which is steerable in the $A|B$ split. Note that $\RR'_{01}$ is non-signalling in the $0|1$ split. By definition,
\begin{equation}
\begin{split}
E_{\mathrm{UR}}(\RR'_{01}) &\coloneqq \inf_{\ZZ_{01} \in \mathrm{UR_{ns}}} \sup_{\pi_{01}} S_{\mathrm{q}}(\rho_{W\!Z}'||\sigma_{W\!Z})\text{,}
\end{split}
\end{equation}
where $\ZZ_{01}\coloneqq\{Q_{01}(a_0,a_1|x_0,x_1) \sigma_{a_0,a_1|x_0,x_1}\}_{a_0,a_1,x_0,x_1}$ is from $\mathrm{UR_{ns}}$ and $\rho_{W\!Z}'$ and $\sigma_{W\!Z}$ are the associated CQ-states.

We have the following identities:
\begin{align}
    E_{\mathrm{UR}}(\RR'_{01}) &\coloneqq \inf_{\ZZ_{01} \in \mathrm{UR_{ns}}} \sup_{\pi_{01}} S_{\mathrm{q}}(\rho_{W\!Z}'||\sigma_{W\!Z})\\
    &\geq \inf_{\ZZ_{01} \in \mathrm{UR_{ns}}} \sup_{\pi_{0}\pi_{1}} S_{\mathrm{q}}(\rho_{W\!Z}'||\sigma_{W\!Z})\\
    &\geq \inf_{\ZZ_{01} \in \mathrm{UR_{ns}}} \sup_{\pi_0\pi_1} \left[S_{\mathrm{q}}(\mathcal{N}(\rho_W')||\mathcal{N}(\sigma_W)) + S_{\mathrm{q}}\left(\rho_Z'\middle|\middle| \sum_{k} \alpha_{k} \sigma_{Z}^{k}\right) \right]\\
    &=\inf_{\ZZ_{01} \in \mathrm{UR_{ns}}} \sup_{\pi_0} \left[S_{\mathrm{q}}(\mathcal{N}(\rho_W')||\mathcal{N}(\sigma_W)) + \sup_{\pi_1}S_{\mathrm{q}}\left(\rho_Z'\middle|\middle| \sum_{k} \alpha_{k} \sigma_{Z}^{k}\right) \right]\text{.}
\end{align}

In the second line, we are taking the supremum over the smaller set of product probability distributions, i.e.~when $\{\pi_{01}(x_0,x_1)\}_{x_0,x_1} = \{\pi_0(x_0)\pi_1(x_1)\}_{x_0,x_1}$. In the third line, we are using the Piani inequality (eq.~\eqref{eq:Piani's_Result}), with a measurement map $\mathcal{N}:\mathcal{L}(\mathcal{H}_W)\to\mathcal{L}(\mathbb{C}^{m} \otimes \mathbb{C}^{o} \otimes  \mathbb{C}^{n})$ associated to the POVM $\{\Pi_k\}_{k}\coloneqq\{\ketbra{x_0}{x_0} \otimes \ketbra{a_0}{a_0} \otimes \Gamma_i\}_{x_0,a_0,i}\subset \mathcal{L}(\mathbb{C}^{m} \otimes \mathbb{C}^{o} \otimes \mathcal{H}_{B_0})$, where $\{\ket{x_0}\}_{x_0}$ and $\{\ket{a_0}\}_{a_0}$ are orthonormal bases and $\{\Gamma_i\}_{i=1}^{n}$ is an IC-POVM. Finally, in the fourth line, we use the fact that $\rho_W'$ and $\sigma_W'$ do not depend on the probability distribution $\pi_1$.

We continue by invoking \textbf{Lemma \ref{lemma:analysis}} to exchange the outer infimum and supremum:
\begin{align}
    E_{\mathrm{UR}}(\RR'_{01}) &\geq \inf_{\ZZ_{01} \in \mathrm{UR_{ns}}} \sup_{\pi_0} \left[S_{\mathrm{q}}(\mathcal{N}(\rho_W')||\mathcal{N}(\sigma_W)) + \sup_{\pi_1}S_{\mathrm{q}}\left(\rho_Z'\middle|\middle| \sum_{k} \alpha_{k} \sigma_{Z}^{k}\right) \right]\\
    &\geq \sup_{\pi_0} \inf_{\ZZ_{01} \in \mathrm{UR_{ns}}} \left[S_{\mathrm{q}}(\mathcal{N}(\rho_W')||\mathcal{N}(\sigma_W)) + \sup_{\pi_1}S_{\mathrm{q}}\left(\rho_Z'\middle|\middle| \sum_{k} \alpha_{k} \sigma_{Z}^{k}\right) \right]\\
    &\geq \sup_{\pi_0} \left[\inf_{\ZZ_{01} \in \mathrm{UR_{ns}}} S_{\mathrm{q}}(\mathcal{N}(\rho_W')||\mathcal{N}(\sigma_W)) + \inf_{\ZZ_{01} \in \mathrm{UR_{ns}}} \sup_{\pi_1}S_{\mathrm{q}}\left(\rho_Z'\middle|\middle| \sum_{k} \alpha_{k} \sigma_{Z}^{k}\right) \right]\text{.}
\end{align}

In the third line, we have used the fact that the infimum of a sum is larger than or equal to the sum of the infima of the parts. From \textbf{Lemma \ref{lemma:finite_assemblage}} we have that, in our case, Piani's inequality is not vacuous, leading to $\tr((\Pi_k\otimes\mathbb{1}_Z)\sigma_{W\!Z})=0$ only if $\alpha_k=0$. Thus we can invoke \textbf{Lemmas \ref{lemma:post_measurement_assemblages_unsteerable}} and \textbf{\ref{lemma:convex_sum_CQ_assemblages}} and \textbf{Definition \ref{def:urns}} to substitute general unsteerable (US) assemblages $\tilde{\ZZ}_{0}$ and $\tilde{\ZZ}_1$, in place of the marginals of the $\mathrm{UR_{ns}}$ assemblage $\ZZ_{01}$:
\begin{align}
    E_{\mathrm{UR}}(\RR'_{01}) &\geq \sup_{\pi_0} \left[\inf_{\ZZ_{01} \in \mathrm{UR_{ns}}} S_{\mathrm{q}}(\mathcal{N}(\rho_W')||\mathcal{N}(\sigma_W)) + \inf_{\ZZ_{01} \in \mathrm{UR_{ns}}} \sup_{\pi_1}S_{\mathrm{q}}\left(\rho_Z'\middle|\middle| \sum_{k} \alpha_{k} \sigma_{Z}^{k}\right) \right]\\
    &\geq \sup_{\pi_0} \left[\inf_{\tilde{\ZZ}_{0} \in \mathrm{US}} S_{\mathrm{q}}(\mathcal{N}(\rho_W')||\mathcal{N}(\tilde{\sigma}_W)) + \inf_{\tilde{\ZZ}_{1} \in \mathrm{US}} \sup_{\pi_1}S_{\mathrm{q}}\left(\rho_Z'\middle|\middle| \tilde{\sigma}_Z\right) \right]\\
    &=\sup_{\pi_0} \inf_{\tilde{\ZZ}_{0} \in \mathrm{US}} S_{\mathrm{q}}(\mathcal{N}(\rho_W')||\mathcal{N}(\tilde{\sigma}_W)) + \inf_{\tilde{\ZZ}_{1} \in \mathrm{US}} \sup_{\pi_1}S_{\mathrm{q}}\left(\rho_Z'\middle|\middle| \tilde{\sigma}_Z\right)\text{,}
\end{align}
where $\tilde{\sigma}_W,\tilde{\sigma}_W$ are the CQ-states related to the general unsteerable assemblages $\tilde{\ZZ}_{0}$ and $\tilde{\ZZ}_1$. In the last line, we have used the fact that the right summand is no longer dependent on $\pi_0$.

Let us now focus on the left summand:
\begin{align}
\sup_{\pi_0} \inf_{\tilde{\ZZ}_{0} \in \mathrm{US}} S_{\mathrm{q}}(\mathcal{N}(\rho_W')||\mathcal{N}(\tilde{\sigma}_W)) \geq \sup_{\pi_0>0} \inf_{\tilde{\ZZ}_{0} \in \mathrm{US}} S_{\mathrm{q}}(\mathcal{N}(\rho_W')||\mathcal{N}(\tilde{\sigma}_W))= \sup_{\pi_0>0} S_{\mathrm{q}}(\mathcal{N}(\rho_W')||\mathcal{N}(\overline{\sigma}_W))>0\text{.}
\end{align}
The first inequality is true since the supremum is taken over the smaller set of strictly positive probability distributions. In the second equality we use the fact that the infimum is reached at a CQ-state $\overline{\sigma}_W$ related to an unsteerable assemblage $\overline{\ZZ}_1$. Similarly to the case of behaviours explored in \textbf{Lemma \ref{lemma:MinEntropy}} this is also true given that unsteerable assemblages form a compact set and the quantum KL-divergence is lower semi-continuous. Finally, given a fixed unsteerable assemblage $\overline{\sigma}_W$, a strictly positive probability distribution $\pi_0$ and an informationally complete measurement map $\mathcal{N}$, \textbf{Lemma \ref{lemma:injectivity_ICPOVMs}} can be invoked to show that since $\rho'_W$ is steerable and $\overline{\sigma}_W$ is unsteerable we have that $\rho'_W\neq\overline{\sigma}_W$ leading to $\mathcal{N}(\rho'_W)\neq\mathcal{N}(\overline{\sigma}_W)$ leading to their KL-divergence not being zero.

Finally, keeping in mind that \textbf{Lemma \ref{lemma:finite_assemblage}} guarantees that, in our case, Piani's inequality is not vacuous, we have that
\begin{align}
    E_{\mathrm{UR}}(\RR'_{01}) > \inf_{\tilde{\ZZ}_{1} \in \mathrm{US}} \sup_{\pi_1}S_{\mathrm{q}}\left(\rho_Z'\middle|\middle| \tilde{\sigma}_Z\right)\eqqcolon E_{\mathrm{UR}}(\RR'_1)= E_{\mathrm{UR}}(\RR)\text{,}
\end{align}
where we have used that \textbf{Lemma \ref{lemma:Reduced_CQ-states}} applies given that $\RR'_{01}$ is non-signalling in the $0|1$ split. Finally, we have that if $\RR'_{01}$ is a broadcast version of $\RR$, then $\RR'_1 = \RR$.
\end{proof}

\end{widetext}
\end{document}